\documentclass[a4paper,11pt]{article}
\pdfoutput=1


\usepackage[margin=0.8in]{geometry}
\setlength{\parskip}{6pt}
\setlength{\parindent}{0 pt}
\usepackage{array}
\usepackage{bigstrut}


\usepackage{algorithm}
\usepackage{algpseudocodex}


\usepackage[english]{babel}
\usepackage{amsmath,amssymb,amsthm}
\usepackage{mathtools}

\usepackage[font=small,labelfont=bf]{caption}
\usepackage{float}
\usepackage{tikz}
\usetikzlibrary{patterns,arrows,patterns.meta}
\usepackage[font={small,sf}, labelfont=bf]{caption}

\usepackage{setspace}
\usepackage[export]{adjustbox}
\usepackage{makecell}
\usepackage[inline]{enumitem}
\usepackage{url}



\usepackage{xcolor}
\definecolor{darkblue}{RGB}{0,0,128}
\definecolor{darkgreen}{RGB}{0,150,0}


\usepackage[numbers,sort&compress]{natbib}
\usepackage[pdfusetitle]{hyperref}
\hypersetup{breaklinks, colorlinks, linkcolor=blue, citecolor=darkgreen, filecolor=red, urlcolor=darkblue}


\newtheorem{theorem}{Theorem}
\newtheorem*{theorem*}{Theorem}
\newtheorem{lemma}[theorem]{Lemma}

\newtheorem{claim}[theorem]{Claim}
\newtheorem{corollary}[theorem]{Corollary}

\newtheorem{definition}[theorem]{Definition}
\newtheorem{example}[theorem]{Example}
\usepackage{thmtools}
\usepackage{thm-restate}


\usepackage{mystyle}
\graphicspath{{../}{./}{src_tikz/}}

\usepackage{authblk}

\title{Viderman's algorithm for quantum LDPC codes}
\author{Anirudh Krishna}
\author{Inbal Livni Navon}
\author{Mary Wootters}
\affil{Stanford University, Stanford, CA, USA, 94305}
\date{}

\begin{document}

\maketitle

\begin{abstract}
    Quantum low-density parity-check (LDPC) codes, a class of quantum error correcting codes, are considered a blueprint for scalable quantum circuits.
To use these codes, one needs efficient decoding algorithms.
In the classical setting, there are multiple efficient decoding algorithms available, including \emph{Viderman's algorithm} (Viderman, TOCT 2013).  Viderman's algorithm for classical LDPC codes essentially reduces the error-correction problem to that of \emph{erasure}-correction, by identifying a small envelope $L$ that is guaranteed to contain the error set. 

Our main result is a generalization of Viderman's algorithm to {quantum} LDPC codes, namely \emph{hypergraph product codes} (Tillich, Z\'emor, IEEE T-IT, 2013).  This is the first erasure-conversion algorithm that can correct up to $\Omega(D)$ errors for constant-rate quantum LDPC codes, where $D$ is the distance of the code.
In that sense, it is also fundamentally different from existing decoding algorithms, in particular from the \texttt{small-set-flip} algorithm (Leverrier, Tillich, Z\'emor, FOCS, 2015).
Moreover, in some parameter regimes, our decoding algorithm improves on the decoding radius of existing algorithms.
We note that we do not yet have linear-time erasure-decoding algorithms for quantum LDPC codes, and thus the final running time of the whole decoding algorithm is not linear; however, we view our linear-time envelope-finding algorithm as an important first step.

\end{abstract}

\section{Introduction}

Error correcting codes play a critical role in the storage and transmission of both classical and quantum information, by protecting this information from corruption.
\emph{Low-Density Parity-Check} (LDPC) codes are a ubiquitous family of graph-based error correcting codes.
(Classical) LDPC codes were first introduced by Gallager in the 1960's~\cite{gallager1962low}, and today are used in practice for everything from satellite communication to 5G networks. 
One of the reasons for their ubiquity is that LDPC codes support extremely fast \emph{decoding algorithms}.
Such algorithms take a \emph{corrupted codeword} $\ccorruptcw \in \{0,1\}^n$ that is promised to be close in Hamming distance to a true codeword $\ccodeword \in \{0,1\}^n$, and ``correct'' $\ccorruptcw$ to return $\ccodeword$ itself.
Two of these algorithms are the $\flip$ algorithm~\cite{sipser1996expander}, and \emph{Viderman's algorithm}~\cite{viderman2013linear}.
 
Both algorithms run in linear time, but have very different flavors: $\flip$ is a greedy algorithm, which iteratively flips bits until they are all corrected; in contrast, Viderman's algorithm essentially reduces the error-correction problem to the \emph{erasure}-correction problem\footnote{In error correcting codes, an \emph{error} is when a (qu)bit has a value that is different than it is supposed to have; in particular, the decoder does not know which (qu)bits are in error. 
An \emph{erasure} is when a (qu)bit is replaced by a special symbol $\bot$. 
The erasure-correction problem is generally easier than the error-correction problem, because the decoder knows where the erasures have occured.} by identifying a small \emph{envelope} $L$ that contains all of the errors, and replacing them with $\bot$; then an erasure-correction algorithm can be run to fill in the $\bot$'s.
The two algorithms also give different guarantees: in particular, in some settings, Viderman's algorithm can correct more errors, while relaxing the constraints on the graph underlying the code construction.

Due to their importance in the classical setting, it is natural to adapt LDPC codes---and their decoding algorithms---to the quantum setting.
This was done by Tillich and Z\'emor in \cite{tillich2013quantum}, who constructed quantum LDPC Codes using \emph{hypergraph product codes}.
More recently, a series of exciting breakthroughs \cite{evra2022decodable,kaufman2021new,hastings2021fiber,breuckmann2021balanced} have led to quantum LDPC Codes ~\cite{panteleev2022asymptotically,leverrier2022quantum} with a near-optimal (up to constant factors) trade-off between the amount of redundancy required and the error correction capabilities. 

\textbf{Our question: A quantum Viderman's algorithm?}  While many constructions of quantum LDPC codes do come with linear-time decoding algorithms, all of the algorithms that we are aware of (which can correct up to $\Omega(D)$ errors) are variants of the $\flip$ algorithm mentioned above.
In particular, it has been an open question whether or not a Viderman-style algorithm---which first reduces the problem to erasure-decoding---could be used to correct a \emph{quantum} LDPC code.

\textbf{Our main result: the first Viderman-style algorithm for constant-rate Quantum LDPC Codes.} 
We discuss our results in more detail below in Section~\ref{sec:results}, but briefly, our algorithm applies to \emph{hypergraph product codes}, and identifies a small envelope $\env$ containing all of the errors in linear time.

\textbf{Why we care.} Before we discuss our results in more detail, we briefly mention a few points of motivation for a Quantum Viderman-style algorithm for hypergraph product codes.

\begin{itemize}

    \item \textbf{Converting errors to erasures.}  As discussed above, Viderman's algorithm, and our quantum analog, identifies a small \emph{envelope} $\env$, which is guaranteed to contain all of the errors.
    Then it treats the (qu)bits in $\env$ as erasures and runs an erasure-decoding algorithm to finish the job.
    In fact, coming up with such conversion algorithms for quantum codes has been a goal in other work.  For example, the \texttt{UnionFind} algorithm of Delfosse \& Nickerson gives such an algorithm for surface codes~\cite{delfosse2021union}.  Moreover, there has been some partial success for {constant-rate} quantum LDPC codes: Delfosse, Londe \& Beverland give a version of \texttt{UnionFind} that converts errors into erasures for hypergraph product codes~\cite{delfosse2022toward}.  However, the decoding radius for hypergraph product codes is $O(D^\beta)$ for some $\beta < 1$, where $D$ is the distance of the code.  In contrast, our work gives a linear-time algorithm to convert up to $\Omega(D)$ errors into erasures.
    

    \item \textbf{Improved parameters for decoding quantum LDPC codes.}
    Our quantum Viderman-style algorithm has (modestly) improved parameters over quantum $\flip$-style algorithms in certain parameter regimes (see Table~\ref{tab:comparisons} and the discussion in Section~\ref{sec:results}).
    Moreover, we are hopeful that the parameters can be further improved.
    To support our hope, we note that improvements on the original version of (classical) Viderman's algorithm have led to improved algorithms for decoding classical expander codes when the graph is a very good expander, in terms of the error radius and the requirements on the underlying graph~\cite{chen2021improved}.
    As quantum $\flip$-like algorithms have been extensively studied and optimized over the years, our algorithm provides limited improvements over the state-of-the-art $\flip$-like algorithms, but it does significantly improve the error radius over the first $\flip$-like results~\cite{leverrier2015quantum} (roughly saving a factor the underlying graph's degree, see Table~\ref{tab:comparisons}).
    Given the classical landscape,  we are hopeful that as our ideas are further developed, more significant improvements will arise.

  \item \textbf{A fundamentally new decoding algorithm.} As mentioned above, all decoding algorithms for quantum LDPC codes that we are aware of with decoding radius $\Omega(D)$ are similar to the $\flip$ algorithm of \cite{sipser1996expander}.
    Viderman's algorithm is of a fundamentally different flavor than $\flip$, and---as we discuss more below in Section~\ref{sec:techoverview}---there were several challenges to overcome to make it work in the quantum setting. 
    In addition to expanding our algorithmic toolbox for quantum LDPC codes, we hope that overcoming these challenges deepens our understanding of quantum LDPC codes, and will lead to further progress in the future.

    \item \textbf{A fundamental class of codes.}  As mentioned above, hypergraph product codes were the first quantum LDPC codes with decent distance, and remained the state-of-the-art for many years.  In recent years, a series of breakthroughs has resulted in quantum LDPC codes with much better distance.  However, we believe that hypergraph product codes are a good starting point. 

\end{itemize}

Now that we have given the high-level motivation for a quantum Viderman-style algorithm, we outline our results in more detail in Section~\ref{sec:results}. 
 After that, we will survey related work in Section~\ref{sec:related}. 

\subsection{Our Results}\label{sec:results}

Before we state our results, we introduce a bit of notation (see Section~\ref{sec:background} for full definitions).
A $\dsl N,K,D \dsr_2$ \emph{quantum error correcting code} $\qcode$ encodes a set of $K$ logical qubits in a set $\qubits$ of $N$ qubits.
It is defined by two sets $\pcs$ and $\gens$, which are sets of constraints on the qubits in $\qubits$, and which must interact with each other in a particular way.
A codeword of $\qcode$ can be thought of as an assignment of $0$ or $1$ to each qubit 
so that all of the constraints in $\pcs$ and $\gens$ are satisfied.
The \emph{distance} $D$ of $\qcode$ (formally defined in Section~\ref{sec:background}) corresponds to the number of (Pauli) errors that the code can correct.
Thus, it is desirable to have $K$ as close to $N$ as possible, while having $D$ as large as possible.

A quantum code is a \emph{quantum LDPC code} if all of the constraints in $\pcs$ and $\gens$ are \emph{sparse}, meaning that not too many qubits in $\qubits$ are involved in each one.  As with classical LDPC codes, this leads to a natural graph-theoretic connection---sparse parity-checks can be represented as sparse bipartite graphs---and the constructions of codes we consider are based on bipartite expander graphs. 

 There are two types of errors a quantum code may face, \emph{$X$-type errors} and \emph{$Z$-type errors}. 
 The constraints $\pcs$ are meant to correct $Z$-type errors, while the constraints $\gens$ are meant to correct $X$-type errors. 
 It suffices to deal with these two types individually, so for the rest of this overview we focus on $Z$-type errors. 
 We refer to $\pcs$ as ``parity-checks,'' and $\gens$ as ``generators.''  
 
Suppose that a codeword in $\qcode$ has been corrupted by a set of ($Z$-type) errors $\err \subseteq \qubits$: that is, the value of each qubit in $\err$ has been flipped.
We are guaranteed that $\err$ (after being appropriately \emph{reduced}, see Sections~\ref{sec:techoverview} and~\ref{sec:background}) is small, and we would like to correct these errors.  
The largest that $|\err|$ can be is $D/2$, half of the distance of the code, so our goal will be to allow for $|\err|$ as close to this as possible.
The number of (reduced) errors that a code can tolerate is called the \emph{decoding radius}.  

\subsubsection{Main Result: Viderman's algorithm for hypergraph product codes}

Our main result is the first quantum version of Viderman's algorithm, which we call \emph{Small-Set-Find} ($\ssfind$, which we give at a high level as Algorithm~\ref{alg:ssfind-intro} and in more detail in Algorithm~\ref{alg:ssfind}). 
$\ssfind$ applies to 
\emph{hypergraph product codes}.
Hypergraph product codes are an important class of quantum LDPC codes, and were the first quantum LDPC codes shown to achieve constant rate $K/N$ and non-trivial distance $D$~\cite{tillich2013quantum}.
We formally define them in Section~\ref{sec:HPCodes}, and for now just note that they are built out of a bipartite expander graph.

 $\ssfind$ takes as input the set $\unsat \subseteq \pcs$ of unsatisfied parity checks, and outputs an \emph{envelope} $\env$, with the guarantees that $\err \subseteq \env$ and that $\env$ is not too large.  The following is our main result.

\begin{theorem}
\label{thm:main}
Let $\mathcal{H}$ be an $\dsl N,K,D \dsr_2$ hypergraph product code (see Section~\ref{sec:HPCodes}) on qubits $\cQ$ and with $X$-parity-checks $\pcs$, constructed from an expander graph $G = (V,E)$ with left-degree $\Delta_V$ and right-degree $\Delta_C$, and expansion parameter $\epsilon < 1/10$.

Let $\err$ be a (reduced) error pattern with weight at most 
\[ |\err| < \gamma \cdot D - \Delta_V, \]
where 
$ \gamma = \frac{\Delta_V}{\Delta_C} \cdot \frac{1 - 10 \epsilon}{4}$, and let $\unsat \subseteq \pcs$ be the set of unsatisfied parity checks arising from $\err$.
Then $\ssfind$ (Algorithm~\ref{alg:ssfind}), given $\unsat$, runs in time $O_{\Delta_V, \Delta_C}(|\err|)$, and returns an envelope $\env \subseteq \mathcal{Q}$, so that $\err \subseteq \env$, and 
\[ |\env| \leq \frac{1}{\gamma} \cdot |\err|. \]
\end{theorem}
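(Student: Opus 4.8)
The plan is to mimic the structure of classical Viderman's algorithm, adapting it to the two-dimensional structure of hypergraph product codes. Recall that a hypergraph product code built from $G = (V, E)$ has qubits indexed (roughly) by $V \times V'$ and $E \times E'$ for two copies of the Tanner graph, and the $X$-checks are indexed by something like $V \times E'$ (and $E \times V'$). The key point is that each qubit lives in a "row" and a "column" inherited from the two factor graphs, and a parity check touches qubits in one row (or one column) of the grid. I would first set up this combinatorial picture carefully, and then argue that the set of unsatisfied checks $\unsat$, projected onto rows and columns, controls the error set.

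The algorithm $\ssfind$ should iteratively grow the envelope $\env$: starting from the qubits adjacent to $\unsat$, repeatedly add any qubit that has "enough" of its neighboring checks already flagged (analogous to Viderman's rule of adding a variable node when more than half its neighbors are in the current check-set). The two things to prove are \emph{correctness} ($\err \subseteq \env$) and \emph{size} ($|\env| \le |\err|/\gamma$), plus the linear running time. For correctness, I would argue by contradiction: suppose some error qubit is never added; look at the "frontier" of error qubits not in $\env$ at the end, and use the expansion of $G$ (with parameter $\epsilon < 1/10$) to show this frontier has many unique neighbors among the unsatisfied checks, which forces at least one of its qubits to satisfy the inclusion rule — contradiction. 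The hypothesis $|\err| < \gamma D - \Delta_V$ is what keeps us inside the regime where the relevant expansion bound applies (the $-\Delta_V$ slack presumably accounts for boundary effects of a single step), and the factor $\gamma = \frac{\Delta_V}{\Delta_C} \cdot \frac{1-10\epsilon}{4}$ looks like exactly the combination of the degree ratio and the expansion loss that comes out of a unique-neighbor counting argument. For the size bound, I would set up a potential/charging argument: every qubit added to $\env$ can be charged to unsatisfied checks or to error qubits in a way that each error qubit is charged only $O(1/\gamma)$ times; equivalently, track how $|\unsat|$ (or a weighted version) shrinks or is consumed as $\env$ grows, using that $|\unsat| \le \Delta_V |\err|$ initially.

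The main obstacle I expect is the correctness argument in the quantum setting, specifically handling the fact that the "error set" is only defined up to adding stabilizers (this is the \emph{reduced} error pattern), so one cannot simply talk about "the" errors the way one does classically. One has to argue that the \emph{syndrome} determines enough structure — i.e., that for \emph{every} representative of the error coset of minimal-ish weight, the envelope contains it, or alternatively that the envelope contains \emph{some} valid correction. This is where hypergraph product structure is essential: the expansion of $G$ gives a unique-neighbor property that must be transferred to the product code's Tanner graph, and the interaction between the $X$- and $Z$-type checks (the CSS condition) is what prevents pathological small-weight stabilizers from derailing the counting. Concretely, I would expect a lemma of the form "any reduced $Z$-error of weight $< \gamma D$ has a syndrome whose row/column structure pins it down," proved via expansion, to be the technical heart of the argument, with the algorithm's analysis then following the classical template fairly closely. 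The linear running time should be routine given sparsity: each qubit enters $\env$ at most once, and checking the inclusion rule is local.
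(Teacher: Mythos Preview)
Your high-level plan (coverage by contradiction plus an expansion-based size bound plus a sparsity argument for running time) matches the paper's architecture, but there is a genuine gap at the algorithmic level that would make the proof fail as written.

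You describe $\ssfind$ as adding \emph{individual qubits} once enough of their neighboring checks are flagged, exactly as in classical Viderman. The paper explains (and illustrates with the ``half-a-generator'' example in Figure~\ref{fig:viderman-fail-intro}(b)) why this cannot work for hypergraph product codes: the product graph $\cG$ is not a good vertex expander, and there exist reduced errors in which \emph{every} single error qubit has only about half of its neighboring checks unsatisfied. If your threshold is high enough to prevent the envelope from exploding, you will never add these qubits; if it is low enough to add them, the envelope can swallow the whole code. The paper's resolution is that $\ssfind$ searches over \emph{small sets} $\add \subseteq \supp(\gen)$ (locally reduced subsets of a generator's support) rather than single qubits, and uses the score
\[
\score(\add) = \frac{|\uqnbhd(\add)\cap \sus^c|}{\Delta\lVert \add\rVert},
\]
i.e.\ it asks that the \emph{unique} neighborhood of $\add$ has \emph{small} overlap with the \emph{complement} of $\sus$. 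This particular choice (unique neighbors, complement) is what makes sets with many internal cancellations register as suspicious; the more obvious candidates $|\qnbhd(\add)\cap\sus|/|\qnbhd(\add)|$ or $|\uqnbhd(\add)\cap\sus|/|\qnbhd(\add)|$ do not have this behavior.

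This change propagates into both halves of the analysis. For coverage (Lemma~\ref{lem:coverage}), the contradiction argument must exhibit not a single residual qubit but a whole set $\add$ with $\score(\add)\le 2\epsilon$. The paper does this by a two-case analysis: a ``classical-like'' case where VV and CC residual errors do not interfere near some generator (and a singleton works), and a ``quantum-like'' case where they do, in which one must build $\add$ from both $\add_V$ and $\add_C$ inside a carefully chosen generator and show $\uqnbhd(\add)\cap\unq\subseteq\sus$ for a product-of-unique-neighborhoods set $\unq$. Your sketch does not anticipate this split. For the size bound, the paper does not use a charging argument but rather sandwiches $|\qnbhd(\env)|$: a lower bound of $\tfrac{1}{2}(1-\epsilon)\Delta\lVert\env\rVert$ from row/column expansion (Lemma~\ref{lem:max-expanding-size}), and an upper bound of $|\unsat| + (\tfrac{1}{4}+2\epsilon)\Delta\lVert\env\rVert$ from summing the per-step growth (Lemma~\ref{lem:upp-L-bound}), where the $\tfrac{1}{4}$ comes from Lemma~\ref{lem:reduced-nbhd} bounding $|\add_V|\cdot|\add_C|$ for locally reduced $\add$. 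The factor $\gamma = \tfrac{\Delta_V}{\Delta_C}\cdot\tfrac{1-10\epsilon}{4}$ drops out of combining these two in the weighted norm $\lVert\cdot\rVert$ and then converting back to cardinality; it is not obtained from a direct charging of envelope qubits to error qubits.
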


It is important to note that $\ssfind$ does not actually correct the errors $\err$; rather it reduces the problem to that of erasure-decoding, and we can use an erasure-decoding algorithm from there.  Notice that by combining the assumption on $|\err|$ with the conclusion about $|\env|$, we get that $|\env| < D$, and in particular the code can be uniquely decoded from $|\env|$ erasures.
In the classical setting, there is a relatively simple linear-time erasure-decoding algorithm that will do this~\cite{gallager1962low,shokrollahi1999new}. 
However, in the quantum setting, we unfortunately do not have a linear-time erasure-decoding algorithm.
Erasure-decoding amounts to solving a linear system and this can be done via Gaussian elimination.
The pivot points need only be qubits adjacent to the error and in this case, the size of this set is at most $N^{1/2}$.
This implies that we can do erasure decoding in time $O(N^{1.5})$.
However, at the moment, a linear-time erasure decoding algorithm  that works up to the radius $D$ remains elusive.
This is a major drawback of our result, as it means that a final decoding algorithm would run in time $O(N^{1.5})$; however, we are hopeful that follow-up work will provide a linear-time erasure-decoding algorithm, which will then result in a linear-time algorithm for decoding from errors.

In Table~\ref{tab:comparisons}, 
we compare the parameters of Theorem~\ref{thm:main} to the existing decoding algorithm for hypergraph product codes, which is called Small-Set-Flip ($\ssflip$).
This algorithm works like $\flip$, except in each greedy step instead of flipping a single bit, it flips a small set of bits at a time (hence the name).
$\ssflip$ was introduced in~\cite{leverrier2015quantum}, and further expanded in~\cite{fawzi2020constant,fawzi2018efficient,grospellier2019decodage}.\footnote{These works show that $\ssflip$ can handle a constant fraction of stochastic errors (as opposed to $O(\sqrt{N})$ adversarial errors); they also show that $\ssflip$ is robust to errors in the parity check bits.}

As we see in Table~\ref{tab:comparisons}, the parameters are in general incomparable.  Our algorithm $\ssfind$ yields an improvement in the decoding radius when $G$ is an excellent expander, and there is a large gap between $\Delta_V$ and $\Delta_C$.  For example, if $\Delta_C = 2 \Delta_V$ (so $r=1/2$) and $\epsilon = 1/20$ is small, then the decoding radius for $\ssfind$ is $0.062 D$, compared to less than $0.058D$ for $\ssflip$ (with the analysis in \cite{grospellier2019decodage}).

\setlength\extrarowheight{10pt}
\begin{table}[h]
    \centering
    
    \begin{tabular}{|m{4cm}|m{4cm}|m{2cm}|}
        \hline
         Algorithm & Decoding Radius & Required Expansion \\
         \hline
         \hline
           $\ssflip$ \cite{leverrier2015quantum} & $\frac{1}{3(1+\Delta_C)}D$       & $\epsilon < 1/6$\\
           \hline
           $\ssflip$ \cite{grospellier2019decodage}                        &  $\left(\frac{2r(1-8\epsilon)}{4+2r(1-8\epsilon)}\right)\left(\frac{r}{\sqrt{1+r^2}}\right)D$           & $\epsilon < 1/8$  \\ \hline
           \begin{minipage}{4cm} \begin{center} This paper: $\ssfind$ plus erasure-decoding   \vspace{.2cm} \end{center}\end{minipage}                      & $\frac{1-10\epsilon}{4} rD$       & $\epsilon < 1/10$\\
        \hline
    \end{tabular}
    \caption{Comparison of the parameters between different decoding algorithm of hypergraph product codes, with underlying graph is $G$, which is an $(\alpha_V,\epsilon_V,\alpha_C,\epsilon_C)$-bipartite expander (Def.~\ref{def:exp}), with left and right degree $\Delta_V$ and $\Delta_C$ respectively.
    Here we use $r=\frac{\Delta_V}{\Delta_C}$, $\epsilon = \max(\epsilon_V,\epsilon_C)$.
    Above, $D = \min(\alpha_V|V|,\alpha_C|C|)$ is the distance of the code.
    }
    \label{tab:comparisons}
\end{table}

\subsection{Related Work}\label{sec:related}

\paragraph{Algorithms for Classical LDPC Codes.}
While LDPC codes have been studied since the 1960's ~\cite{gallager1962low}, 
the first linear-time decoding algorithms for graph-based codes (or, any codes) to correct a constant fraction of \emph{worst-case} errors was given by Sipser and Spielman~\cite{sipser1996expander} who studied \emph{expander codes;} these are LDPC codes where the underlying constraint graph is an expander.\footnote{Equivalently, the code is defined as the kernel of a parity-check matrix $H$, which is the adjacency matrix of an unbalanced expander graph.}  Sipser and Spielman introduced the $\flip$ algorithm, a greedy algorithm which iteratively flips bits until it converges.
If the underlying expander $G = (V,C, E)$ is an $(\alpha_V, \eps_V)$-expander  (meaning that sets $S$ of size at most $\alpha_V n$ have neighborhoods of size at least $|S|\Delta_V(1-\eps_V)$, where $\Delta_V$ is the left-degree of the graph), the resulting expander code of length $n$ has minimum distance at least $\alpha_V n$.

When $\epsilon_V < 1/4$, Sipser and Spielman showed that $\flip$ can correct up to $\alpha_V n/(\Delta_V+1)$ errors.
Subsequent works gave other algorithms that improved this to $\left( \frac{ 1- 3\eps_V}{1 - 2\eps_V} \right)\alpha_V n$ errors, which is better when $\eps_V < 1/3$ ~\cite{feldman2006lp, viderman2013lp,viderman2013linear}; the first two of these references are based on linear programming, and the third is what we refer to as \emph{Viderman's algorithm}.  In addition to improving the decoding radius, Viderman's algorithm also requires less from the underlying expander graph (in that $\eps_V$ can be taken to be larger), making it easier to obtain constructions.  
As mentioned above, Viderman's algorithm works by identifying an envelope $\env$ of ``suspicious" bits, and then treating them as erasures.
More recently, \cite{chen2021improved} gave improved combinatorial bounds and algorithmic results for expander codes.
Their improved algorithms include variants and combinations of both $\flip$ and Viderman's algorithm; in particular, for small $\eps_V < 1/8$, they present a variant of Viderman's algorithm that decodes up to a significantly larger radius than previous works: $\frac{\sqrt{2}-1}{2\eps_V}\alpha_V n$ for very small $\eps_V$, and $\frac{1-2\eps_V}{4\eps_V} \alpha_V n$ for slightly larger $\eps_V$ that is still smaller than $1/8$. 

We note that there are several constructions of graph-based codes other than expander codes, most notably \emph{Tanner constructions}, where the code is constructed both from an expander graph and from an appropriate \emph{inner code} (for example, \cite{zemor2001expander}).
Decoding algorithms for these codes typically leverage decoding algorithms for the inner code, and thus are a bit different from the focus of our paper.

\paragraph{Algorithms for Quantum LDPC Codes.}

Until recently, it was unclear whether asymptotically good quantum LDPC codes (that is, with $K, D = \Omega(N)$) even exist.
Until 2020, hypergraph product codes were the best candidates; Tillich \& Z\'emor showed that they can achieve $K=\Theta(N)$ and $D=\Theta(\sqrt{N})$~\cite{tillich2013quantum}. 
Shortly after, Leverrier, Tillich \& Z\'emor \cite{leverrier2015quantum} proposed a quantum version of $\flip$ called Small-Set-Flip ($\ssflip$) for (expander-based) hypergraph product codes. 
The algorithm $\ssflip$ is also a linear-time algorithm, and in the adversarial setting can correct up to $\Theta(\sqrt{N})$ errors, within a constant fraction of the optimal decoding radius.

Following a series of breakthroughs crossing the $\sqrt{N}$ barrier~\cite{evra2022decodable,kaufman2021new,hastings2021fiber,breuckmann2021balanced}, asymptotically good quantum LDPC codes were finally recently attained, first by Panteleev \& Kalachev \cite{panteleev2022asymptotically}, and later simplified by Leverrier \& Z\'emor \cite{leverrier2022quantum}.
These constructions can be equipped with decoding algorithms~\cite{evra2022decodable,leverrier2023decoding,leverrier2023efficient,gu2022efficient}.
There are also variants on the code construction that have efficient decoding algorithms:
Lin \& Hsieh~\cite{lin2022good} proposed a construction based on an as-yet unresolved conjecture;
Dinur \emph{et al}.\ \cite{dinur2022good} also proposed a variant of the Tanner construction.
The decoding algorithms for all constant-rate quantum LDPC codes with distance exceeding the $\sqrt{N}$ barrier are based on generalizations of $\ssflip$.

There has been previous work on quantum error correcting codes that focuses on converting errors into erasures.
The $\mathtt{Union Find}$ algorithm by Delfosse \& Nickerson was the first erasure-conversion algorithm that runs in almost-linear-time~\cite{delfosse2021almost}.
It builds on the linear-time maximum-likelihood erasure decoding algorithm of Delfosse \& Z\'emor \cite{delfosse2020linear} for surface codes.
While $\mathtt{Union Find}$ was first proposed for the surface code, it has since been generalized to a broader class of codes \cite{delfosse2021union,delfosse2022toward}.
However, the decoding radius of $\mathtt{Union Find}$ for LDPC codes can be suboptimal.
As it applies to hypergraph product codes, the decoding radius of this algorithm was only guaranteed to scale as $\Theta(D^{\beta})$ for some constant $1 > \beta > 0$.
To the best of our knowledge, our algorithm is the first erasure-conversion algorithm for constant-rate quantum LDPC codes that achieves a decoding radius of $\Theta(D)$.

We provide a high-level contrast of $\mathtt{UnionFind}$ and Viderman's algorithm at the end of Section~\ref{sec:techoverview}.

Given an algorithm that reduces the problem to erasure-decoding, our next question is about efficient erasure-decoding algorithm.  As in the classical (linear) case, decoding a quantum error correcting code boils down to solving a linear system, and can be done straightforwardly in time $O(N^3)$.  As mentioned above, in the case of hypergraph product codes, in fact this linear system can be solved in time $O(N^{1.5})$.
In the case of stochastic erasures (where a hypergraph product code can recover from $\Omega(N)$ erasures, rather than $\Omega(\sqrt{N})$), Connolly \emph{et al}.\ ~\cite{connolly2022fast} give an improved $O(N^2)$-time erasure-decoding algorithm for hypergraph product codes, which generalizes belief propagation.

\subsection{Open Problems and Future Directions}
We view our work as the first step towards improved decoding algorithms for quantum LDPC codes.  While we are able to obtain a slight improvement in some parameter regimes for hypergraph product codes, there is still much to do:
\begin{itemize}
    \item Given that ours is only the first version of a quantum Viderman's algorithm, we hope that future work will build on our ideas to improve the parameters, obtaining bigger improvements over $\ssflip$ (and in more parameter regimes) than those reported in Table~\ref{tab:comparisons}.  In particular, we are hopeful that the ideas of \cite{chen2021improved} (which improve Viderman's algorithm in the classical case) can be applied on top of our work.
    \item We believe that our ideas can be extended to some of the more recent constructions of asymptotically good LDPC codes. 
    In particular, the construction of Lin and Hsieh~\cite{lin2022good} begins with a hypergraph product code, and quotients it out by an appropriate group action to obtain a code with better distance.
    As their construction is built directly on hypergraph product codes, this is the next natural target for our techniques\footnote{To the best of our knowledge, the constructions of \cite{panteleev2022asymptotically, leverrier2022quantum, dinur2022good} are generalizations of Tanner codes.
    For this reason, it is unclear if there is a straightforward generalization of Viderman's algorithm to these settings.}.
\end{itemize}
\subsection{Organization}
In Section~\ref{sec:techoverview}, we give a high-level technical overview of our approach, and introduce our algorithm $\ssfind$.
In Section~\ref{sec:background}, we formally give the necessary background and definitions for expander graphs; classical expander codes and Viderman's algorithm; and quantum codes and hypergraph product codes.
In Section~\ref{sec:HPCssflip}, we analyze $\ssfind$ on hypergraph product codes and prove our main theorem.

\subsection{Acknowledgements}

AK is supported by the Bloch Postdoctoral Fellowship and NSF grant CCF-1844628.
AK would also like to thank Shashwat Silas for introducing him to Viderman's algorithm and related discussions.
ILN is supported by the Simons Foundation Collaboration on the Theory of Algorithmic Fairness, the Sloan Foundation Grant 2020-13941, and the Zuckerman STEM Leadership Program.
MW was partially supported by NSF grants CCF-1844628, CCF-2231157, and CCF-2133154

\section{Technical Overview}\label{sec:techoverview}
We begin with an exposition of the classical version of Viderman's algorithm~\cite{viderman2013linear}.
Before we begin, we set up a bit more notation. 
A (classical, binary, linear) error correcting code $\ccode$ with block length $n$ is just a linear subspace of $\mathbb{F}_2^n$. 
The \emph{distance} $d$ of $\ccode$ is the minimum Hamming distance between any two distinct elements (called \emph{codewords}) of $\ccode$.
In particular, given a \emph{corrupted codeword} $\ccorruptcw \in \mathbb{F}_2^n$ with Hamming distance less than $d/2$ from some $\ccodeword \in \ccode$, the triangle inequality implies that it is in theory possible to recover $\ccodeword$.
The goal of a \emph{decoding algorithm} is to do so efficiently.

A classical \emph{LDPC code} is a code that can be defined by a sparse bipartite graph $G = (V,C, E)$ in the following way.
We associate the left-hand vertices $V$ (with $|V| = n$) with the symbols of a codeword, and the right-hand vertices $C$ with \emph{parity-checks}. 
We say that a string $\ccodeword \in \mathbb{F}_2^n$ is in $\ccode$ if, for each parity-check $c \in C$, $\sum_{v \in \Gamma(c)} \ccodeword_v = 0$, where $\nbhd(c)$ denotes the neighbors of $c$ in $G$.
We focus on the case where the underlying graph $G$ is a bipartite expander (see Definition~\ref{def:exp}); such a code is sometimes called an \emph{expander code}~\cite{sipser1996expander}.

\subsection{Classical Viderman's Algorithm}
Intuitively, Viderman's algorithm works by iteratively identifying ``suspicious'' bits and parity-checks; the suspicious bits are added to a set $L$ (called the \emph{envelope}), and the suspicious parity checks are added to a set $R$.
At the beginning, $R$ is the set $\unsat$ of unsatisfied parity checks.
From there, the rule is simple: if a vertex $v \in V$ is connected to too many suspicious checks, it, and all the checks it touches, are labeled suspicious.
This process repeats until it stabilizes (see Algorithm~\ref{alg:find-gentle}).

\begin{algorithm}[h]
    \begin{algorithmic}[0]
        \State \textbf{Input:} $\unsat \subseteq C$
        \State \textbf{Output:} $L \subseteq V$
    \end{algorithmic}
    \begin{algorithmic}[1]
        \State $h \gets (1 - 2\eps_V)\Delta_V$, \Comment{$\eps_V$, $\Delta_V$ are parameters of the underlying expander graph (see Section~\ref{sec:background}).}
        \State $L \leftarrow \emptyset$, $R \leftarrow \unsat$
        \While{$\exists v \in V$ such that $|\nbhd(v) \cap R| \geq h$}
            \State $L \leftarrow L \union \{v\}$
            \State $R \leftarrow R \union \nbhd(v)$.
        \EndWhile
        \State \Return $L$
    \end{algorithmic}
    \caption{$\find$:  Viderman's decoding algorithm.}
    \label{alg:find-gentle}
\end{algorithm}

At the end of the day, we hope that (a) the envelope $L$ contains all of the errors; and (b) $L$ is not too large.  
Viderman showed that this is indeed the case, provided that the initial number of errors in $\err$ is small enough.

\subsection{Quantum Viderman's Algorithm for Hypergraph Product Codes}
We recall the notation for a quantum error correcting code $\qcode$ from Section~\ref{sec:results}: $\qcode$ is defined by a set of parity checks $\pcs$ and generators $\gens$, and encodes $K$ logical qubits into a set $\qubits$ of $N$ qubits.
The parity checks $\pcs$ and generators $\gens$ serve to define two vector spaces (aka, classical linear codes) $\ccode_X$ and $\ccode_Z$ respectively: $\pcs$ gives the parity-checks for $\ccode_X$ and $\gens$ gives the parity-checks for $\ccode_Z$.\footnote{That is, $\ccode_X$ is the kernel of the adjacency matrix defined by $\pcs$, and similarly for $\ccode_Z$.}
For $\qcode$ to be a valid quantum code, we require $\ccode_X^{\perp} \subseteq \ccode_Z$.

\paragraph{Hypergraph Product Codes.}
We define hypergraph product codes formally in Section~\ref{sec:HPCodes}, but for now we give an informal definition and a picture.

Let $G=(V\cup C,E)$ be a biregular bipartite graph.
Our hypergraph product code will be built out of two copies of $G$.  
Qubits are associated with elements of $V \times V \sqcup C \times C$; we write $\qubits = \qubits_V \sqcup \qubits_C$ for the two parts, where $\sqcup$ denotes disjoint union.
The $X$-type parity checks $\pcs$ and $Z$-type generators $\gens$ are associated with $V\times C$ and $C \times V$ respectively.

Now, we form a graph $\cG$ on the vertices $\qubits \sqcup \pcs \sqcup \gens$, which is given by the \emph{graph product} of two copies of $G$.
The way this graph product works is formally described in Section~\ref{sec:background}, and informally illustrated in Figure~\ref{fig:hgpcode}.
For example, the qubit $(\nu,v)\in V\times V =: \qubits_V$ is connected to 
every element in 
$\set{\nu}\times\nbhd(v)\subset V\times C = \pcs$, where $\nbhd(v)$ denotes the neighborhood of $v$ in $G$. 
 In particular, there is a copy of $G$ in the row indexed by $\nu \in V$ (as well as by every other row and every column).

This graph product defines the sets $\pcs$ and $\gens$, and hence the quantum hypergraph product code $\qcode$.

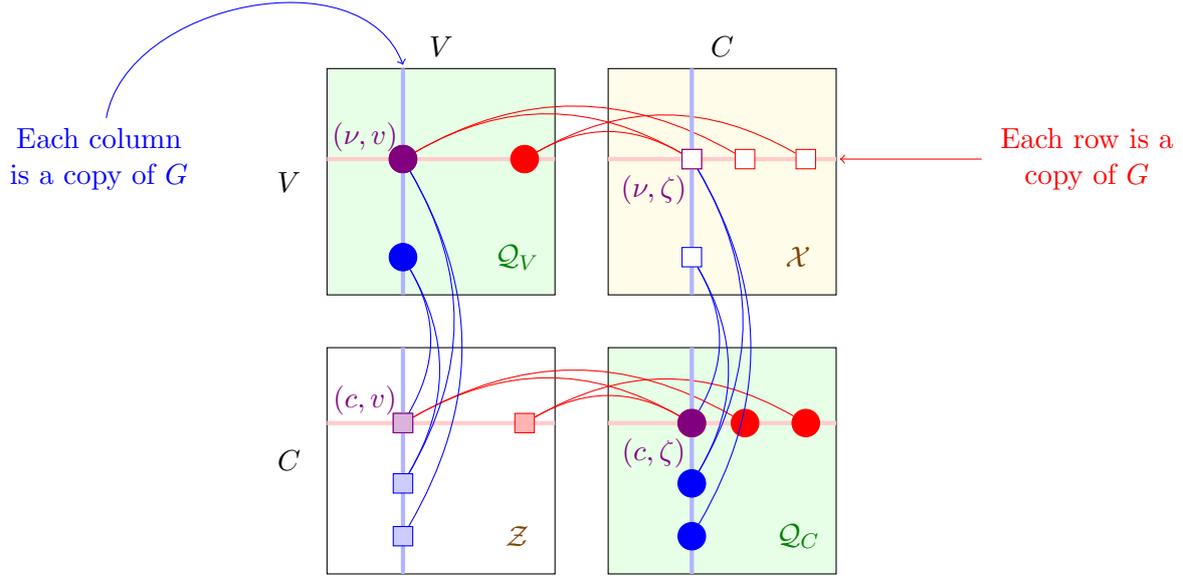
\begin{figure}[h]
    \centering
    \begin{tikzpicture}
        \draw[fill=white] (0,0) rectangle (3,3);
	\node[orange!50!black] at (2.5,.5) {$\gens$};
        \node at (-0.5,1.5) {$C$};
        \draw[fill=green!10] (0,3.7) rectangle (3,6.7);
        \node at (-0.5,5.2) {$V$};
        \draw[fill=green!10] (3.7,0) rectangle (6.7,3);
	\node[green!50!black] at (6.2,.5) {$\qubits_C$};
	\node[green!50!black] at (2.5,4.2) {$\qubits_V$};
        \node at (1.5,7) {$V$};
        \draw[fill=yellow!10] (3.7,3.7) rectangle (6.7,6.7);
	\node[orange!50!black] at (6.2,4.2) {$\pcs$};
        \node at (5.2,7) {$C$};
\begin{scope}
        \draw[ultra thick,red!20] (0,5.5) to (3,5.5);
	\draw[ultra thick,red!20] (3.7,5.5) to (6.7, 5.5);
        \node[draw,red, fill=red, circle](vv) at (1,5.5) {};
        \node[draw,red, fill=red, circle](vv2) at (2.6,5.5) {};
        \node[violet] at (.5,5.8) {$(\nu, v)$};
	\node[draw,red,fill=white,rectangle](vc) at (4.8, 5.5) {};
        \node[violet] at (4.3,5.1) {$(\nu, \zeta)$};
	\node[draw,red,fill=white,rectangle](vc2) at (5.5, 5.5) {};
	\node[draw,red,fill=white,rectangle](vc3) at (6.3, 5.5) {};

	\draw[red] (vv) to[out=30,in=150] (vc);
	\draw[red] (vv) to[out=30,in=150] (vc2);
	\draw[red] (vv2) to[out=30,in=150] (vc3);
	\draw[red] (vv2) to[out=30,in=150] (vc);

	\node[red](lblrow) at (10, 5.5) {\begin{minipage}{2.5cm}\begin{center}Each row is a copy of $G$\end{center} \end{minipage}};
	\draw[red,->](lblrow) to (6.75, 5.5);
	
	\node[blue](lblcol) at (-3, 5.5) {\begin{minipage}{2.5cm}\begin{center}Each column is a copy of $G$\end{center} \end{minipage}};
	\draw[blue,->](lblcol) to[out=80,in=110] (1, 6.75);
\end{scope}

\begin{scope}[yshift=-3.5cm]
        \draw[ultra thick,red!20] (0,5.5) to (3,5.5);
	\draw[ultra thick,red!20] (3.7,5.5) to (6.7, 5.5);
       \node[draw,violet, fill=violet!30, rectangle](vv) at (1,5.5) {};
        \node[draw,red, fill=red!30, rectangle](vv2) at (2.6,5.5) {};
	\node[draw,red,fill=red,circle](vc) at (4.8, 5.5) {};
	\node[draw,red,fill=red,circle](vc2) at (5.5, 5.5) {};
	\node[draw,red,fill=red,circle](vc3) at (6.3, 5.5) {};

	\draw[red] (vv) to[out=30,in=150] (vc);
	\draw[red] (vv) to[out=30,in=150] (vc2);
	\draw[red] (vv2) to[out=30,in=150] (vc3);
	\draw[red] (vv2) to[out=30,in=150] (vc);

\end{scope}

\begin{scope}
	\draw[ultra thick, blue!30] (1,6.7) -- (1, 3.7);
	\draw[ultra thick, blue!30] (1,0) -- (1,3);
        \node[draw,violet, fill=violet, circle](vv) at (1,5.5) {};
	\node[draw,blue,fill=blue!20,rectangle](cv) at (1, .5) {};
	\node[draw,blue,fill=blue!20,rectangle](cv2) at (1, 1.2) {};
	\node[draw,violet,fill=violet!30,rectangle](cv3) at (1, 2) {};
	\node[violet] at (.5,2.3) {$(c,v)$};
	\node[draw,blue,fill=blue,circle](vv3) at (1,4.2) {};
\draw[blue] (vv) to [out=-60,in=60]  (cv);
\draw[blue] (vv) to [out=-60,in=60]  (cv2);
\draw[blue] (vv3) to [out=-60,in=60]  (cv2);
\draw[blue] (vv3) to [out=-60,in=60]  (cv3);

\end{scope}

\begin{scope}[xshift=3.8cm]
	\draw[ultra thick, blue!30] (1,6.7) -- (1, 3.7);
	\draw[ultra thick, blue!30] (1,0) -- (1,3);
        \node[draw,violet, fill=white, rectangle](vv) at (1,5.5) {};
	\node[draw,blue,fill=blue,circle](cv) at (1, .5) {};
	\node[draw,blue,fill=blue,circle](cv2) at (1, 1.2) {};
	\node[draw,violet,fill=violet,circle](cv3) at (1, 2) {};
	\node[violet] at (.5,1.6) {$(c,\zeta)$};
	\node[draw,blue,fill=white,rectangle](vv3) at (1,4.2) {};
\draw[blue] (vv) to [out=-60,in=60]  (cv);
\draw[blue] (vv) to [out=-60,in=60]  (cv2);
\draw[blue] (vv3) to [out=-60,in=60]  (cv2);
\draw[blue] (vv3) to [out=-60,in=60]  (cv3);

\end{scope}

    \end{tikzpicture}
    \caption{Definition-by-picture of hypergraph product codes~\cite{tillich2013quantum}, formed by taking the graph product of two copies of $G$.}
    \label{fig:hgpcode}
\end{figure}

\paragraph{Reduced Errors.}
In order to explain what we mean by a ``reduced'' error in Theorem~\ref{thm:main}, we begin by pointing out what seems to be (but is not actually) a major hurdle in designing quantum LDPC codes in general, namely that the underlying graph $\cG$ is no longer a good expander.  (We note that this notion of reduced errors is not specific to our work; we include it for background and motivation).

In more detail, in the classical setting, expansion of the underlying graph $G$ guarantees that any small enough set of errors $\err$ has many \emph{unique neighbors},\footnote{A \emph{unique neighbor} of a set $\err$ is a vertex that has exactly one neighbor in $\err$.} which guarantees the existence of many unsatisfied parity checks that whenever there are not too many errors.
This guarantees that small sets of errors can be at least detected, and ideally efficiently corrected.

However, the quantum code defined by the graph product $\cG$ is \emph{not} a very good expander, and in particular there are very small sets with no unique neighbors.
To see such an example, recall that the set of generators $\gens$ is identified with $C \times V$.  Consider a particular generator $(c,v) \in \gens$.
Now consider the set 
\[\supp(c,v) = \Gamma(c) \times \{v\} \sqcup \{c\} \times \Gamma(v), \] which is the set of qubits that $(c,v)$ is adjacent to in the graph depicted in Figure~\ref{fig:hgpcode}.   This is a small set---indeed, it has constant size if the degree of $G$ is constant---but it actually has \emph{no} unique neighbors.  
To see why, we refer the reader to Figure~\ref{fig:viderman-fail-intro}(a), where we have zoomed in on $(c,v)$ and $\supp(c,v)$.  Now consider the parity checks contained in $\Gamma(c) \times \Gamma(v) \subseteq V \times C = \pcs$.  As can be seen in Figure~\ref{fig:viderman-fail-intro}(a), each such parity-check is connected to \emph{two} qubits in $\supp(c,v)$.  In particular, if $\err = \supp(c,v)$, all of the parity-checks in $\pcs$ would be satisfied.
Moreover, this is unavoidable: the requirement that $\ccode_X^\perp \subseteq \ccode_Z$ in fact necessitates this phenomenon.

At first glance then, the decoding task seems impossible, as there exist small sets of errors that cannot be detected.
However, the quantum decoding task is not identical to the classical decoding task.
In more detail, to correct a quantum error correcting code, it turns out that we do not need to be able to correct \emph{all} small sets of errors, we only need to correct errors up to ``toggling'' sets of the form $\supp(c,v)$.  (Formally, the codewords are actually \emph{cosets} modulo $\ccode_Z^\perp$, and we define the weight of a coset to be the smallest weight of any coset representative; see Section~\ref{sec:background}).  

Thus, the bad example in Figure~\ref{fig:viderman-fail-intro}(a) is not actually a bad example after all, because if we ``toggle'' the set $\supp(c,v)$, there are no errors at all.  This motivates the definition of a \emph{reduced} error, which is an error that has as small weight as possible, modulo ``toggling'' sets like $\supp(c,v)$. (That is, a reduced form of an error set $\err$ is a least weight representative in its coset modulo $\ccode_Z^\perp$).

While this turns out to not be a problem for hypergraph product codes in general, this discussion does highlight several challenges with adapting $\find$ to the quantum setting.  Below, we discuss these, and our solutions to them, in more detail.

\paragraph{First challenge: treating a single qubit as ``suspicious''.}
Given the discussion above, we now see our first  challenge in generalizing Viderman's algorithm to the quantum setting.  Viderman's algorithm ($\find$ in Algorithm~\ref{alg:find-gentle}) works by investigating each bit separately, and deciding whether it is ``suspicious'' enough---that is, has at least $h$ suspicious neighbors---to add it to the envelope $L$.  But now consider the example in Figure~\ref{fig:viderman-fail-intro}(b), where there error set $\err$ consists of \emph{half} of the set $\supp(c,v)$ for some generator $(c,v)$.  For each qubit in $\supp(c,v)$, half of the parity-checks it is connected to are satisfied, and half are unsatisfied, regardless of whether that qubit is in error or not. 
  Further, if we ``toggle'' the set $\supp(c,v)$, we get a completely different set of bad qubits that lead to the same set $\unsat$.  
  How can we identify whether a \emph{single} qubit is suspicious in this setting?  

The way we deal with this is, instead of checking for suspicious \emph{single} qubits, we check for suspicious \emph{small sets} of qubits, namely every reduced subset $\add$ in each local view $\nbhd(c) \times \nbhd(v)$. We note that this is a similar solution to how $\ssflip$ deals with the same issue.  However, in the case of Viderman's algorithm, there are a few additional challenges that we must consider.

\begin{figure}[h]
    \centering
    \begin{tikzpicture}
        \begin{scope}
        \node at (-1.4,3) {(a)};
        \draw[thick] (0,0) rectangle (3,3);

        \draw[thick] (0,-0.5)--(3,-0.5);
        \node at (1.5,-1) {$\{c\} \times \nbhd(v)$};
        \draw[thick] (-0.5,0)--(-0.5,3);
        \node[rotate=90] at (-1,1.5) {$\nbhd(c) \times \{v\}$};
        \draw (-0.3,-0.3) rectangle (-0.6,-0.6) node[below,rotate=-45] {$(c,v)$};

        \draw[thick] (-0.5,0.7) edge[-,out=30,in=150] (0.7,0.7);
        \draw[thick] (0.7,-0.5) edge[-,out=60,in=-60] (0.7,0.7);
        
        \draw[fill=white] (-0.5,0.7) circle (0.2);
        \draw[fill=white] (0.7,-0.5) circle (0.2);
        \draw[fill=white] (0.5,0.5) rectangle (0.9,0.9);
        \end{scope}
        \begin{scope}[xshift=200]
        \node at (-1.4,3) {(b)};
        \fill[gray!50] (0,0) rectangle (1.5,1.6);
        \fill[gray!50] (1.5,1.6) rectangle (3,3);
        \draw[thick] (0,0) rectangle (3,3);

        \draw[thick] (0,-0.5)--(3,-0.5);
        \node at (1.5,-1) {$\{c\} \times \nbhd(v)$};
        \draw[thick] (-0.5,0)--(-0.5,3);
        \node[rotate=90] at (-1,1.5) {$\nbhd(c) \times \{v\}$};
        \draw (-0.3,-0.3) rectangle (-0.6,-0.6) node[below,rotate=-45] {$(c,v)$};

        \draw[thick] (-0.5,0.7) edge[-,out=30,in=150] (0.7,0.7);
        \draw[thick] (0.7,-0.5) edge[-,out=120,in=240] (0.7,0.7);
        \draw[thick] (-0.5,2.25) edge[-,out=30,in=150] (1.2,2.25);
        \draw[thick] (1.2,-0.5) edge[-,out=30,in=300] (1.2,2.25);
        
        \draw[fill=white] (-0.5,0.7) circle (0.2);
        \draw[fill=red] (-0.5, 2.8) circle (0.2);
        \draw[fill=red] (-0.5, 2.3) circle (0.2);
        \draw[fill=red] (-0.5, 1.8) circle (0.2);
        \draw[fill=red] (0.2,-0.5) circle (0.2);
        \draw[fill=red] (0.7,-0.5) circle (0.2);
        \draw[fill=red] (1.2,-0.5) circle (0.2);
        \draw[fill=gray!80] (0.5,0.5) rectangle (0.9,0.9);
        \draw[fill=white] (1,2.05) rectangle (1.4,2.45);
        \end{scope}
    \end{tikzpicture}
    \caption{A view of $\supp(c,v)$ for a generator $(c,v)$, as well as the parity-checks in $\pcs = V \times C$ that are connected to $\supp(c,v)$.  (a) An illustration of the fact that the set $\supp(c,v)$ has no unique neighbors.  (b) A bad example for naively applying Viderman's algorithm.  The error $\err$ is depicted in red, and failed parity-checks are marked in gray.
    Here, \emph{all} qubits in $\supp(c,v)$ look equally ``suspicious,'' in the sense that they are connected to roughly the same number of failed parity-checks.}
    \label{fig:viderman-fail-intro}
\end{figure}
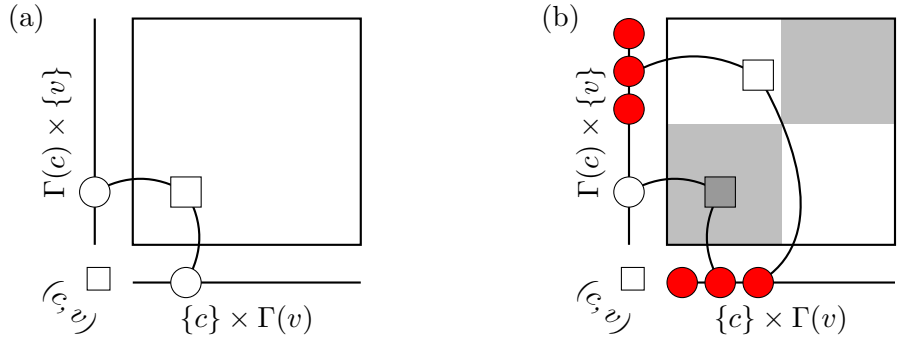

\paragraph{Second challenge: defining ``suspiciousness''.}
Once we decide to look at small sets $\add$, we still need to decide how to measure the ``suspiciousness'' of a set $\add$, to decide whether to add it to the envelope $\env$.\footnote{In the classical version of Viderman's algorithm, we use ``$L$'' and ``$R$'' as the envelope and set of suspicious checks, respectively.  To avoid confusion, we use $\env$ and $\sus$ in the quantum setting.}  A first attempt would be simply to count the number of suspicious parity checks that $\add$ is connected to (that is, the quantity $|\qnbhd(\add) \cap \sus|$, where $\qnbhd(\cdot)$ denotes the neighborhood\footnote{In the quantum setting, we use $\qnbhd$ to denote neighborhoods, and we use $\nbhd$ in the classical setting.} of $\add$, and $\sus$ denotes the set of suspicious parity checks), normalized by $|\qnbhd(\add)|$.  However, the example Figure~\ref{fig:viderman-fail-intro}(b) shows that this will probably not work.  Indeed, if $\add$ were the set of errors shown in Figure~\ref{fig:viderman-fail-intro}(b), then $|\qnbhd(\add) \cap \sus|$ is not that large, only about $2/3$ of $|\qnbhd(\add)|$.   If we set our threshold as low as $2/3$ (relative to $(1 - 2\eps_V)$ in the classical case), it seems likely that the envelope $\env$ will grow extremely large, given that the underlying graph is not a good expander. 

Instead, we define a \emph{score} function $\score(\add)$ (Definition~\ref{def:score}). 
 The main differences between the first attempt above and our score function are that we look only at \emph{unique} neighbors of $\add$ in the numerator, and we consider the intersection with the \emph{complement} $\sus^c$ instead of $\sus$.  (We also normalize things slightly differently.)  That is, instead of calling a set $\add$ suspicious when $|\qnbhd(\add) \cap \sus|$ is large (relative to $|\qnbhd(\add)|$), we instead call it suspicious if $|\uqnbhd(\add) \cap \sus^c|$ is \emph{small} (relative to some measure $\|\add\|$ of the size of $\add$).  These seem like small changes, but they have important implications.  

 In particular, sets $\add$ with lots of ``cancellations'' (that is, where $\qnbhd(\add) \setminus \uqnbhd(\add)$ is large) will in general have smaller unique neighborhoods than sets $\add$ with not as many cancellations.  Thus, sets with more cancellations will in general register as more ``suspicious'' than sets with fewer cancellations.

\begin{figure}[h]
\centering
\begin{tikzpicture}

\begin{scope}[xshift=0]
                \node at (-1.4,3) {(a)};
       \fill[orange!20] (0,0) rectangle (3,1.5);
        \draw[thick] (0,0) rectangle (3,3);
        \draw[pattern={Lines[angle=45,distance=5pt,line width=1pt]}, pattern color=blue] (1.5,0) rectangle (3,3);

        \node[blue](a) at (3.5,3) {$\sus$};
        \draw[blue,->](a) to[out=110,in=90] (2.25, 3);

        \draw[thick] (0,-0.5)--(3,-0.5);
        \node at (1.5,-1) {$\{c\} \times \nbhd(v)$};
        \draw[thick] (-0.5,0)--(-0.5,3);
        \node[rotate=90] at (-1,1.5) {$\nbhd(c) \times \{v\}$};
        \draw (-0.3,-0.3) rectangle (-0.6,-0.6) node[below,rotate=-45] {$(c,v)$};

       
         \draw[fill=orange] (-0.5, 1.3) circle (0.2);
        \draw[fill=orange] (-0.5, .8) circle (0.2);
        \draw[fill=orange] (-0.5, .3) circle (0.2);

        \end{scope}

        \begin{scope}[xshift=200]
        \node at (-1.4,3) {(b)};
        \fill[orange!30] (0,0) rectangle (1.5,1.6);
        \fill[orange!30] (1.5,1.6) rectangle (3,3);
        \draw[thick] (0,0) rectangle (3,3);

  \draw[pattern={Lines[angle=45,distance=5pt,line width=1pt]}, pattern color=blue] (1.5,0) rectangle (3,3);

        \node[blue](a) at (3.5,3) {$\sus$};
        \draw[blue,->](a) to[out=110,in=90] (2.25, 3);

        \draw[thick] (0,-0.5)--(3,-0.5);
        \node at (1.5,-1) {$\{c\} \times \nbhd(v)$};
        \draw[thick] (-0.5,0)--(-0.5,3);
        \node[rotate=90] at (-1,1.5) {$\nbhd(c) \times \{v\}$};
        \draw (-0.3,-0.3) rectangle (-0.6,-0.6) node[below,rotate=-45] {$(c,v)$};

        \draw[fill=orange] (-0.5, 2.8) circle (0.2);
        \draw[fill=orange] (-0.5, 2.3) circle (0.2);
        \draw[fill=orange] (-0.5, 1.8) circle (0.2);
        \draw[fill=orange] (0.2,-0.5) circle (0.2);
        \draw[fill=orange] (0.7,-0.5) circle (0.2);
        \draw[fill=orange] (1.2,-0.5) circle (0.2);
        
        \end{scope}
    
\end{tikzpicture}
    \caption{The two sets $\add \subset \qubits$ marked as orange circles; and their unique neighborhoods $\uqnbhd(\add) \subset \pcs$ marked as shaded orange squares.  The set $\sus$ of ``suspicious'' parity checks is drawn in blue lines.  The set $\add$ in (b) will count as much more ``suspicious'' with our $\score$ function than the one in (a), intuitively because it has more ``cancellations''.}\label{fig:overlaps}
\end{figure}
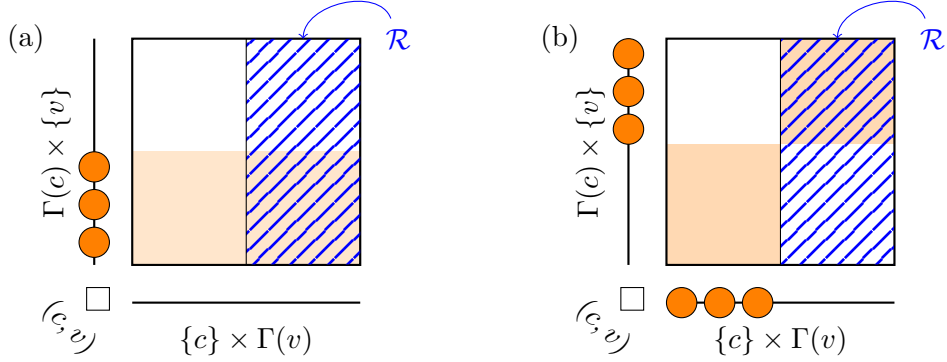

 To see an example of this, consider the two sets $\add$ shown in Figure~\ref{fig:overlaps} (a) and (b).
The set shown in (a) would count as \emph{more} suspicious than the set in (b) with either of the score functions $|\qnbhd(\add) \cap \sus|/|\qnbhd(\add)|$ or $|\uqnbhd(\add) \cap \sus|/|\qnbhd(\add)|$ (and would also be more suspicious if we normalized by $\|\add\|$, which is what we actually do in our score function).  However, if we ask that  $|\uqnbhd(\add) \cap \sus^c|/|\qnbhd(\add)|$ is small, the set shown in (a) would actually count as \emph{less} suspicous than the set in (b) (and it would be the same if we normalized by $\|\add\|$).  Thus, our score function has the desired behavior while the other two candidates do not: sets with more ``cancellations'' should count as \emph{more} suspicious.

We remark that we view coming up with the ``right'' $\score$ function to be one of the main contributions of our work.  As the discussion above shows, there are many options, and subtle differences can be important.

\paragraph{Interlude: the algorithm!} With these first two challenges and solutions described, we can now state an informal version of $\qfind$ (see Algorithm~\ref{alg:ssfind} for the formal version).
Intuitively, our algorithm works the same way as the classical Viderman's algorithm $\find$, except that (1) we consider small sets $\add$ rather than single qubits when updating $\env$; and (2) we use our score function $\score$ to decide if a set $\add$ should be added to $\env$.

\begin{algorithm}[H]
    \begin{algorithmic}[0]
        \State \textbf{Input:} $\unsat \subseteq \pcs$
        \State \textbf{Output:} $\env \subseteq \qubits$.
    \end{algorithmic}
    \begin{algorithmic}[1]
        \State $\env \leftarrow \emptyset$
        \State $\sus \leftarrow \unsat$
        \While{$\exists$ an appropriate set $\add$ such that $\score(\add) \leq h$}
            \State $\env \leftarrow \env \union \add$.
            \State $\sus \leftarrow \sus \union \qnbhd(\add)$
        \EndWhile
        \State \Return $\env$.
    \end{algorithmic}
    \caption{High-level description of $\ssfind$; see Algorithm~\ref{alg:ssfind} for the full version.}
    \label{alg:ssfind-intro}
\end{algorithm}

\paragraph{Third Challenge: the analysis.} 
Once we have our candidate algorithm, the analysis turns out to be much more delicate than the analysis of the classical Viderman's algorithm ($\find$).  We give a brief overview of the proof structure, and the challenges that arise.

First, we recap the analysis of $\find$ in the classical case.
For completeness, we present a simplified version of the classical analysis (simpler than in~\cite{viderman2013linear}, but with a worse quantitative result) in Appendix~\ref{sec:classic-code}.
We give a high-level overview here, and we suggest that the reader go through Appendix~\ref{sec:classic-code} before reading our proof of correctness for the quantum algorithm.

Intuitively, the analysis of $\find$ proceeds in two steps, \emph{coverage} and \emph{bounded growth}.
In the coverage step, we need to show that the envelope  eventually contains all of the errors $\err$.
In the bounded growth step, we need to show that the envelope does not grow too large.  
Below, we discuss both steps, and explain why they are difficult to generalize to the quantum setting.

\emph{Challenge 3a: Coverage Step.}  At a very high level, the argument for coverage in the classical case is as follows.
Suppose that the envelope $L$ does not cover all of $\err$, and let $B = \err \setminus L$ be the part that has not been covered.
By expansion of the underlying graph, the unique neighborhood $\unbhd(B)$ of $B$ is large.
By an averaging argument, there exists some $v \in B$ with many neighbors in $\unbhd(B)$.  However, we claim that  $\unbhd(B) \subseteq R$: that is, every element of $\unbhd(B)$ has already been labeled as suspicious.
Indeed, either these elements have another neighbor in $\err \setminus B$, in which case they were already labeled as suspicious since $\err \setminus B \subseteq L$ has already been labeled suspicious; or they do not, in which case they were in $\unbhd(\err)$ and hence were in $\unsat$, which was labeled suspicious at the beginning.
But then $v$ has many neighbors in $R$, meaning it should have already been added to $L$, a contradiction.

While the coverage proof in the classical case is quite simple, in the quantum case things get much more complicated.
It is still true, and not hard to see, that $\uqnbhd(\mathcal{B}) \subseteq \sus$ (where we use the caligraphic letters $\mathcal{B}$ and $\sus$ to represent the quantum analogs of $B$ and $R$).
However, we are no longer easily guaranteed the existence of a qubit $\qubit \in \mathcal{B}$ that touches many of these vertices, since our graph does not have expansion.
Instead, we employ a delicate argument that leverages expansion of the rows and columns separately.
This argument is handled in Section~\ref{subsec:coverage}.

\emph{Challenge 3b: Bounded growth step.}  Again, we recap the high-level argument in the classical case.  The basic idea is to bound $|\nbhd(L)|$, the size of the neighborhood of the envelope, in two ways.  First, by expansion, $|\nbhd(L)|$ must be large if $L$ is large.  On the other hand, consider building $L$ one vertex at a time.  Each time we add a vertex $v$ to $L$, how much can $|\nbhd(\env)|$ change?  Intuitively, this is not too much, because we only add vertices to $L$ because they have many neighbors in $R$; but a large part of $R$ made up of $\nbhd(L)$ (the rest is $\unsat$).  Thus, any vertex has many neighbors in $R$, hence many neighbors in $\nbhd(L)$, and so does not add too much to $\nbhd(L)$.  This gives an upper bound on $|\nbhd(L)|$.  If $L$ gets too large, these two bounds yield a contradiction, completing the argument.

We mirror the same basic approach---proving both an upper bound and a lower bound---in the quantum setting, but again it is now much more difficult.  For the lower bound, we no longer have good expansion, so we cannot argue immediately that $|\qnbhd(\env)|$ is large just because $\env$ is large; however, it turns out that we can do this by again leverage the fact that both the rows and columns expand.  For the upper bound, one challenge comes from our change to the score function.  As discussed above, intuitively the score function says that a set with more ``cancellations'' should be more suspicious; thus, relative to the classical argument, we may be adding ``suspicious'' sets whose neighborhoods do not overlap as much with $\sus$.  However, we can again use the expansion of the rows and columns to show that this term is manageable.

To complete the bounded growth step, which is handled in Section~\ref{subsec:bounded-L}, we put together the upper bound and the lower bound on $|\qnbhd(\env)|$, and conclude that $\env$ has not grown too large.  This then completes our proof of correctness.

\subsection{Comparison with \tt{UnionFind}} As $\mathtt{Union Find}$ is also an erasure-conversion algorithm that can be applied to hypergraph product codes~\cite{delfosse2022toward} (although with asymptotically smaller error radius, as noted above), we briefly describe it to contrast it with $\ssfind$.
$\mathtt{UnionFind}$ iteratively maintains and updates a set of disjoint clusters $\{\mathsf{Clust_i}\}$ in the subgraph of $\cG$ induced by $\qubits \sqcup \pcs$.
Each cluster $\mathsf{Clust}_i$ forms a connected component in the graph that includes both parity checks and qubits.
The clusters are initialized as the neighborhoods of $\unsat$.
In each iteration, \texttt{UnionFind} tries to find errors $\ferr_i$ in each cluster $\mathsf{Clust}_i$ such that the syndrome of $\ferr_i$ is equal to the unsatisfied parity checks in the interior of $\mathsf{Clust}_i$.
If no such error is found, it enlarges $\mathsf{Clust}_i$ by adding to it all neighbors of $\mathsf{Clust}_i$.
If two clusters $\mathsf{Clust}_i$ overlap, then they are merged.
The main difference between the $\mathtt{UnionFind}$ algorithm and a Viderman-style algorithm is that Viderman's algorithm does not add the full neighborhood of suspected parity-checks, but only those that are ``suspicious enough''.

The clusters $\mathsf{Clust}_i$ can be compared to the union of the envelope $\env$ and the suspicious parity checks $\sus$.
In contrast to $\mathtt{UnionFind}$, $\ssfind$ only adds sets of qubits $\add$ to $\env$ if they have a sufficiently low score.
Furthermore, these sets are subsets of the support $\supp(\gen)$ for generators $\gen \in \gens$.
Finally, $\ssfind$ does not verify whether the envelope contains an error in each iteration; rather, it proceeds until there are no more sets of qubits with sufficiently low score.

\section{Background and Definitions}\label{sec:background}
Before we proceed to the proof, we give the formal definitions that we will need.
\subsection{Expander graphs}
Let $V = [n]$ and $C = [m]$, and $G= (V \union C, E)$ be an undirected bipartite graph that is biregular with left node degree $\Delta_V$ and right node degree $\Delta_C$.

\begin{definition}[Graph neighborhood]
For $S_V \subseteq V$ and $S_C \subseteq C$, let $E(S_V,S_T)$ denote the set of edges between $S_V$ and $S_T$.

For $S_V \subseteq V$, we denote by $\nbhd(S_V)$ the neighborhood of $S_V$
\begin{align}
    \nbhd(S_V) = \{c \in C\;|~ |E(\{c\}, S_V)| \geq 1\}~.
\end{align}
For $S_V \subseteq V $, $\unbhd(S_V)$ denotes the unique neighborhood of $S$
\begin{align}
    \unbhd(S_V) = \{c \in C \;|~ |E(\{c\}, S_V)| = 1\}~.
\end{align}
The definition of neighborhood of $S_C\subset C$ is analogous.
\end{definition}
\begin{definition}[Vertex expander]\label{def:exp}
We say $G$ is an $(\alpha_V, \epsilon_V)$ left vertex-expander if for all $S_V\subseteq V$,
\begin{align}
    |S_V| \leq \alpha_V n \implies |\nbhd(S_V)| \geq (1-\epsilon_V)\Delta_V|S_V|~.
\end{align}
The graph $G$ is an $(\alpha_C,\epsilon_C)$ right vertex-expander if
\begin{align}
    |S_C| \leq \alpha_C m \Longrightarrow |\nbhd(S_C)| \geq (1-\epsilon_C)\Delta_C |S_C|~.
\end{align}
The graph $G$ is an $(\alpha_V, \epsilon_V,\alpha_C,\epsilon_C)$ bidirectional vertex expander if it is an $(\alpha_V, \epsilon_V,)$ left vertex-expander and $(\alpha_C,\epsilon_C)$ right vertex-expander.
\end{definition}
The parameter $\epsilon_V$ is related to the number of collisions between outgoing edges of $S_V\subset V$, i.e.\ vertices in $C$ where multiple edges from $S_V$ are incident. Specifically, by a simple averaging argument we can deduce that for $S_V\subset V$, if $|\nbhd(S_V)| \geq (1-\epsilon_V)\Delta_V|S_V|$ then also $|\unbhd(S_V)| \geq (1-2\epsilon_V)\Delta_V|S_V|$.

\subsection{Classical expander codes}
Expander codes were introduced by Sipser and Spielman \cite{sipser1996expander}.  They are built from a good vertex expander, and defined as follows. 
\begin{definition}[Expander code]\label{def:exp-code}
Let $G=(V\cup C,E)$ be a $(\Delta_V,\Delta_C)$ bi-regular bipartite graph, which is an $(\alpha_V,\epsilon_V)$ left-vertex expander. Denote $|V|=n,|C|=m$.

The expander code induced by the graph $G$ is the subspace $\ccode\subset\set{0,1}^n$ such that for every $\ccodeword\in \ccode$ and $c\in C$,
\[ \bigoplus_{v\in\nbhd(c)} w_v = 0~, \]
where $\oplus$ is the {\rm XOR} function.

\end{definition}
As discussed earlier,
Sipser and Spielman suggested a decoding algorithm $\flip$ for expander codes, based on bit flips. Viderman \cite{viderman2013linear} suggested a different decoding algorithm, $\find$, which finds a set of ``suspicious'' bits $L$ that are guaranteed to contain the error $\err$.
Viderman shows that  as long as the parameter $\eps_V \leq 1/3$ and the number of errors is at most
\[ |\err| \leq \left(\frac{1-3\eps_V}{1 - 2\eps_V}\right)n, \]
then $\find$ is correct, and returns a small envelope. 

We have already given a version of the $\find$ algorithm in Algorithm~\ref{alg:find-gentle}. 
For completeness, in Appendix~\ref{sec:classic-code} we give a slightly simpler proof than the one in \cite{viderman2013linear} for the correctness of Algorithm~\ref{alg:find-gentle}. We obtain worse guarantees, but this slightly simpler proof (which is implicit in \cite{viderman2013linear}) is the one we generalize.

\subsection{Quantum codes}\label{sec:quantum-codes}

Given a classical code $\ccode \in \bbF_2^{N}$, we let $\ccode^{\perp} = \{\bu:\; \forall \bv \in \ccode,\; \sum_i u_i v_i = 0 \pmod{2} \}$ denote its dual.
A $\dsl N, K, D\dsr$ CSS quantum code $\qcode$ on a set of $N$ qubits $\qubits$ is specified by two classical codes $\ccode_Z,\ccode_X \subseteq \bbF_2^N$ that obey the relation $\ccode_X^{\perp} \subseteq \ccode_Z$.
The codes $\ccode_X^{\perp}$ and $\ccode_Z^{\perp}$ form the parity checks for the quantum code.
We let $\pcs$ and $\gens$ be the basis for $\ccode_X^{\perp}$ and $\ccode_Z^{\perp}$ respectively.

The number of encoded qubits $K$ is defined as
\begin{align}
    K = \dim(\ccode_X/\ccode_Z^{\perp}) = \dim(\ccode_X) + \dim(\ccode_Z) - N~.
\end{align}
The distances $D_Z$ and $D_X$ are defined as
\begin{align}
    D_Z = \min\{ |\mathbf{e}| : \mathbf{e} \in \ccode_X \setminus \ccode_Z^{\perp}\}~, \quad
    D_X = \min\{ |\mathbf{e}| : \mathbf{e} \in \ccode_Z \setminus \ccode_X^{\perp}\}~.
\end{align}
The distance of the code $\qcode$ is defined as $D = \min(D_X,D_Z)$.

We identify the set of qubits $\qubits$ with the set $\{1,...,N\}$ and vectors $\mathbf{e} \in \bbF_2^N$ with their support $\err \subseteq \qubits$.
Addition of vectors mod $2$ corresponds to the symmetric difference of sets, denoted $\oplus$: For $\err_1, \err_2 \subseteq \qubits$, we have
$\err_1 \oplus \err_2 = [\err_1 \setminus \err_2] \union  [\err_2 \setminus \err_1]$.

For $\gen \in \gens$ and $\pc \in \pcs$, $\supp(\gen) \subseteq \qubits$ and $\supp(\pc) \subseteq \qubits$ denote the supports of the corresponding basis element.

It turns out that $X$ and $Z$ errors can be corrected separately.
As the two cases mirror each other, we will only consider $Z$-type errors.
The $X$-type parity checks are used to correct them and we shall refer to these as {parity checks}.
In this case, we are interested in the syndromes for the $X$-type parity checks.
For an error $\err$, we let $\unsat$ denote the $X$-type syndrome (unsatisfied $X$-type parity checks):
\begin{align}
    \unsat = \set{\pc | |\supp(\chi) \cap \err| \neq 0 \pmod{2}}~.
\end{align}
The $Z$-type parity checks are not treated as constraints in this view; rather they define $\ccode_Z^{\perp}$, the set of words equivalent to the trivial codeword.  We refer to the set $\gens$ of $Z$-type parity checks as \emph{generators}.

\subsection{Hypergraph Product Codes}
\label{subsec:background-quantum-expander}
\label{sec:HPCodes}

Let $G = (V \union C, E)$ be a $(\Delta_V, \Delta_C)$-biregular bipartite graph.
Suppose it is an $(\alpha_V,\epsilon_V,\alpha_C,\epsilon_C)$ bidirectional vertex-expander, with
$|V| = n$ and $|C| = m$. Recall that we denote by $\nbhd$ the neighborhood in the graph $G$. 
We assume that the graph is unbalanced, i.e. that $n>m$ and $\Delta_C>\Delta_V$.

The $[n,k,d]$ code $\ccode(G)$ is the code defined on Definition~\ref{def:exp-code}, where $V$ and $C$ are associated with bits and parity checks respectively.
The $[m,\tk,\td]$ code $\tC(G)$ is the code associated with the graph $G$, in which $C$ and $V$ swap roles: $V$ is associated with parity checks and $C$ with the bits.
For convenience, we say that $d = \infty$ ($\td = \infty$) if $k = 0$ ($\tk = 0$).

The hypergraph product code $\qcode$ is an $\dsl N,K,D\dsr$ code where $N = n^2 + m^2$, $K = (k)^2 + (\tk)^2$ and $D = \min(d,\td)$.\footnote{Technically the actual distance may be larger than $\min(d,\td)$, which is the \emph{design} distance, so we should have written $D \geq \min(d,\td)$. 
However, for simplicity, we will refer to the design distance as simply the ``distance,'' and denote it by $D$.}
For the proof of these facts, we point the reader to the original paper by Tillich and Z\'emor \cite{tillich2013quantum}.
We proceed to describe the code construction.

\begin{definition}
Let $G_1 = (V_1 \cup C_1, E_{1})$ and $G_2 = (V_2 \cup C_2, E_{2})$ denote two copies of $G$.
The hypergraph product code $\qcode$ is constructed from the graph product $\cG := G_1 \times G_2$, where,
\begin{enumerate}
    \item The set of qubits, denoted $\qubits$, is associated with $\qubits_V \sqcup \qubits_C$, where $\qubits_V = V_1 \times V_2 $ and $\qubits_C = C_1 \times C_2$.
    \item The set of $Z$ stabilizer generators, denoted $\gens$, is associated with $C_1 \times V_2$.
    We shall let $\gen$ denote a generator and write $\gen \sim (c,v) \in C_1 \times V_2$ to say $\gen$ is identified with the tuple $(c,v)$.
    The support $\supp(\gen)$ of the generator $\gen$ is 
    \begin{equation}
        \supp(\gen) = \{(\nu,v): \nu \in \Gamma(c)\} \union \{(c,\zeta): \zeta \in \Gamma(v)\}~.
    \end{equation}
    \item The set of $X$ stabilizer generators, denoted $\pcs$, is associated with $V_1 \times C_2$.
    For $\nu \in V_1$ and $\zeta \in C_2$, the support $\supp(\pc)$ of a parity check $\pc \sim (\nu,\zeta)$ is 
    \begin{equation}
        \supp(\pc) = \{(\nu,v) : v \in \Gamma(\zeta)\} \union \{(c,\zeta) : c \in \Gamma(\nu)\} ~.
    \end{equation}
\end{enumerate}
\end{definition}
We refer to Fig.\ \ref{fig:hgpcode} for an illustration.

Throughout, we use Roman letters for $C_1$ and $V_2$, and Greek letters for $C_2$ and $V_1$.  Thus, generators are Roman tuples like $(c,v)$, while parity checks are Greek tuples like $(\nu, \zeta)$.

We note that in this case, the codes $\ccode_Z^{\perp}$ and $\ccode_X^{\perp}$ are given by
\begin{align*}
    \ccode_Z^{\perp} = {\rm span}(\supp(c_1,v_2) : c_1 \in C_1, v_2 \in V_2)~, \quad
    \ccode_X^{\perp} = {\rm span}(\supp(\nu_1,\zeta_2) : \nu_1 \in V_1, \zeta_2 \in C_2)~.
\end{align*}

If we ignore the generators $\gens$, we get a bipartite graph between the qubits $\qubits$ and then $X$-type parity checks $\pcs$ (that is, the induced graph of $\cG$ on the vertices $\qubits \sqcup \pcs$).  We denote this graph by $\cG_X$.

To make the notations clearer, we use different symbols the neighborhood in the classical graph $G$ and in the quantum graph $\cG$.  In the classical setting, we use $\nbhd$, while in the quantum setting, we use $\qnbhd$.  Formally, we have the following definitions.
\begin{definition}[Quantum neighborhood]\label{def:qgraph}
Let $G=(V\cup C,E)$ be a graph and $\cG_{X}$ be the graph induced by qubits $\qubits$ and parity checks $\pcs$.
For a qubit $\qubit \in \qubits$, we denote its neighborhood by $\qnbhd(\qubit)$, defined as follows:
\begin{align*}
    &\qnbhd(\set{\nu_1,v_2}) = \set{(\nu_1,\zeta_2)| \zeta_2\in\nbhd(v_2)}~, \quad 
    &\qnbhd(\set{c_1, \zeta_2}) = \set{(\nu_1,\zeta_2)| \nu_1\in\nbhd(c_1)}~,
\end{align*}
where $\nbhd$ is the neighborhood in the classic graph $G$.

This naturally extends to a set of qubits, $\qset \subseteq \qubits$, $\qnbhd(\qset) = \cup_{q\in\qset}\qnbhd(\qset)$. 

The unique neighborhood $\uqnbhd(\qset)$ and multineighborhood $\mqnbhd(\qset)$ mirror their classical definitions:
\begin{align*}
    \uqnbhd(\qset) = \set{ \pc | \text{ there is a unique }  \qubit \in \qubits \text{ such that } \pc \in \qnbhd(\qubit)}~, \quad
    \mqnbhd(\qset) = \qnbhd(\qset) \setminus \uqnbhd(\qset)~.
\end{align*}
\end{definition}

For a generator $\gen \in \gens$, we shall abuse notation to write $\qnbhd(\gen)$ to refer to $\qnbhd(\supp(\gen))$.

In addition, we will use the following notation:
\begin{itemize}
    \item For a set of qubits $\qset\in\qubits$, let $\qset_V = \qset\cap \qubits_V$ and $\qset_C = \qset\cap \qubits_C$.
\item Let $\Delta = \Delta_C\times\Delta_V$.
\item Let $\epsilon = \max(\epsilon_C,\epsilon_V)$.
\end{itemize}

Finally, we define a notion that will be useful in our analysis.
\begin{definition}[Weighted norm]\label{def:wtd_norm}
    For $\qset\subseteq \qubits$, we define $\lVert \qset \rVert = \frac{|\qset_V|}{\Delta_C} + \frac{|\qset_C|}{\Delta_V}$. 
\end{definition}

\subsection{Projections of sets}

In this section, we introduce some notation for referring to projections of sets on to the component graphs $G_1$ and $G_2$.
This notation will be used for all projections throughout this paper.

\begin{definition}[Projection]\label{def:proj}
Let $\cG = (\qubits,\pcs)$ be a graph that is a product of $G_1$ and $G_2$.
    
For all $\nu_1 \in V_1$ and all $v_2 \in V_2$, let $\qsetvtwo\subset V_1,\qsetvone\subset V_2$ be the projections on to $G_1$ and $G_2$ respectively:
\begin{align*}
    \qsetvtwo = \set{\nu_1' | (\nu_1',v_2) \in \qset_V}~, \qquad \qsetvone = \set{v_2' | (\nu_1,v_2') \in \qset_V}~.
\end{align*}
We obtain $\qsetVone\subset V_1$ and $\qsetVtwo\subset V_2$ as the union of sets
\begin{align}\label{eq:proj}
    \qsetVone = \bigcup_{v_2} \qsetvtwo~,\qquad \qsetVtwo = \bigcup_{\nu_1} \qsetvone~.
\end{align}
Similarly, for all $c_1 \in C_1$ and $\zeta_2 \in C_2$, let $\qsetcone\subset C_2$ and $\qsetctwo\subset C_1$ denote the projections on to $G_1$ and $G_2$ respectively:
\begin{align}
    \qsetcone = \set{\zeta_2' | (c_1,\zeta_2') \in \qset_C}~,\quad
    \qsetzetatwo = \set{c_1' | (c_1', \zeta_2) \in \qset_C}~.
\end{align}
We obtain $\qsetCone\subset C_1$ and $\qsetCtwo\subset C_2$ as the union of sets
\begin{align}    
    \qsetCtwo = \bigcup_{c_1 \in C_1} \qsetcone~,\quad\qsetCone = \bigcup_{\zeta_2 \in C_2} \qsetzetatwo~.
\end{align}
\end{definition}

We refer to Figure~\ref{fig:projections} for a schematic.

\begin{figure}[h]
\centering
\begin{tikzpicture}[scale=0.8]
    \begin{scope}
        \draw (-1,-3) rectangle (4,2); 
        \draw (-1,2.2) rectangle (4,7.2); 
        \draw (4.2,-3) rectangle (9.2,2); 
        \draw (4.2,2.2) rectangle (9.2,7.2); 
        
        \node at (-1.5,-0.5) {$C_1$};
        \node at (-1.5,4.7) {$V_1$};
        \node at (1.5,-3.5) {$V_2$};
        \node at (6.5,-3.5) {$C_2$};
        
        \foreach \x in {0.8} {
            \foreach \y in {4.1,4.4,4.7}
                \draw[orange,fill=orange!30] ({\x+0.15},{\y+0.15}) circle (0.145);
        }
        \foreach \x in {1.1} {
            \foreach \y in {3.8,4.1,4.4,4.7,5}
                \draw[orange,fill=orange!30] ({\x+0.15},{\y+0.15}) circle (0.145);
        }
        \foreach \x in {1.4} {
            \foreach \y in {3.8,4.1,4.7}
                \draw[orange,fill=orange!30] ({\x+0.15},{\y+0.15}) circle (0.145);
        }
        \draw[{|-|},purple] (0.8,5.5)--node[above] {$\qsetVtwo$} (1.7,5.5);
        \draw[{|-|},purple] (0.6,3.8)--node[left] {$\qsetVone$} (0.6,5.3);
        
        \foreach \x in {5.5} {
            \foreach \y in {3.8,4.1,4.4}
                \draw[orange,fill=orange!30] (\x,\y) rectangle ({\x+0.3},{\y+0.3});
        }
         \foreach \x in {5.8} {
            \foreach \y in {4.4,4.7}
                \draw[orange,fill=orange!30] (\x,\y) rectangle ({\x+0.3},{\y+0.3});
        }
        \foreach \x in {6.1} {
            \foreach \y in {3.8,4.1,4.4,4.7,5}
                \draw[orange,fill=orange!30] (\x,\y) rectangle ({\x+0.3},{\y+0.3});
        }
        \foreach \x in {6.4} {
            \foreach \y in {3.8,4.1,4.7}
                \draw[orange,fill=orange!30] (\x,\y) rectangle ({\x+0.3},{\y+0.3});
        }
        \foreach \x in {6.7} {
            \foreach \y in {4.1,4.4}
                \draw[orange,fill=orange!30] (\x,\y) rectangle ({\x+0.3},{\y+0.3});
        }
        \foreach \x in {7} {
            \foreach \y in {4.1,4.4}
                \draw[orange,fill=orange!30] (\x,\y) rectangle ({\x+0.3},{\y+0.3});
        }
        \foreach \x in {7.3} {
            \foreach \y in {4.4,4.7}
                \draw[orange,fill=orange!30] (\x,\y) rectangle ({\x+0.3},{\y+0.3});
        }
        \draw[{|-|},purple] (5.5,5.5)--node[above] {$\nbhd(\qsetVtwo)$} (7.6,5.5);
        \draw[{|-|},purple] (1.1,3.5)--node[below] {$\qsetvtwo$} (1.7,3.5);
        \draw[{|-|},purple] (5.5,3.5)--node[below] {} (5.8,3.5);
        \draw[{|-|},purple] (6.1,3.5)--node[below] {} (6.7,3.5);
        \node[purple] at (6.1,3.1) {$\nbhd(\qsetvtwo) $};
        \node [gray] at (-1.3,3.9) {$v_2$};
        \draw[gray,dashed] (-1,3.95) to (9.2,3.95);
        \foreach \y in {-0.8} {
            \foreach \x in {6.1,6.7}
                \draw[orange,fill=orange!30] ({\x+0.15},{\y+0.15}) circle (0.145);
        }
        \foreach \y in {-1.1} {
            \foreach \x in {6.4,6.7,7}
                \draw[orange,fill=orange!30] ({\x+0.15},{\y+0.15}) circle (0.145);
        }
        \foreach \y in {-1.4} {
            \foreach \x in {6.1,6.7}
                \draw[orange,fill=orange!30] ({\x+0.15},{\y+0.15}) circle (0.145);
        }
        \foreach \y in {-0.8} {
            \foreach \x in {5.5}
                \draw[orange,fill=orange!30] ({\x+0.15},{\y+0.15}) circle (0.145);
        }
        \foreach \y in {-1.1} {
            \foreach \x in {5.5}
                \draw[orange,fill=orange!30] ({\x+0.15},{\y+0.15}) circle (0.145);
        }
        \foreach \y in {-1.4} {
            \foreach \x in {5.8}
                \draw[orange,fill=orange!30] ({\x+0.15},{\y+0.15}) circle (0.145);
        }
        \draw[{|-|},purple] (5.5,-1.9)--node[below] {$\qsetCtwo$} (7.3,-1.9);
        \draw[{|-|},purple] (7.5,-1.7)--node[right] {$\qsetCone$} (7.5,-0.5);

        \draw[{|-|},purple] (7.8,3.8)--node[rotate=90,below] {$\nbhd(\qsetCone)$} (7.8,5.3);
   \end{scope}
\end{tikzpicture}
\caption{Qubits that form $\qset$ are denoted using circular nodes in $V_1 \times V_2$ and $C_1 \times C_2$.
The projections $\qsetVone$, $\qsetVtwo$, $\qsetCone$ and $\qsetCtwo$ of $\qset$ are indicated.
The neighbors $\qnbhd(\qset)$ within $V_1 \times C_2$ are depicted using square nodes; the projections $\nbhd(\qsetVtwo)$ and $\nbhd(\qsetCone)$ are indicated.
For a specific $v_2 \in V_2$, the neighbors $\qnbhd(\qset(v_2) \times \{v_2\}) = \nbhd(\qset(v_2)) \times \{v_2\}$ are also depicted along the dashed line.}
\label{fig:projections}
\end{figure}

\subsection{Properties of reduced sets}
\label{subsec:properties-reduced-sets}

Recall from Section~\ref{sec:techoverview} that we are allowed to ``toggle'' elements of $\supp(z)$ for $z \in \gens$ without changing the codeword (that is, as above, we are working modulo $\ccode_Z^\perp$). 
To this end, we define a \emph{reduced} error as one of minimum weight modulo this toggling.  Formally, we have the following definition:
\begin{definition}[Reduced representation]
    Given $\err \subset \qubits$, let $[\err]$ be the equivalence class of $\err$ defined by
    \begin{equation*}
        [\err] = \{ \err \oplus \supp | \supp \in \ccode_Z^{\perp} \}~.
    \end{equation*}
    The \emph{reduced representation} $\red{\err} \in [\err]$ is the smallest element in $[\err]$.
    If there are several elements with the minimum weight, we pick one arbitrarily.
\end{definition}

Similarly, we can define a locally reduced set in the support of a generator.
\begin{definition}[Locally Reduced set]
 Let $\gen$ be a generator and let $\supp(\gen)$ be its support. We say that a set $\add \subseteq \supp(\gen)$ is \emph{locally reduced} if
\begin{align}
    |\add_V| + |\add_C| \leq \left(\Delta_V - |\add_V|\right) + \left(\Delta_C - |\add_C| \right)~.
\end{align}   
\end{definition}
If $\add$ is locally reduced then we can bound the size of $|\add_V| \cdot |\add_C|$ via the following lemma, where we recall the notation $\|\add\|$ from Definition~\ref{def:wtd_norm}.

\begin{lemma}
\label{lem:reduced-nbhd}
    If $\add \subseteq \supp(\gen)$ is locally reduced,  then 
    \begin{align*}
        |\add_V| \cdot |\add_C| \leq \frac{1}{4} \Delta \lVert \add \rVert~.
    \end{align*}
\end{lemma}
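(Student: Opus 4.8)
The plan is to unpack the definitions and reduce the inequality to a statement about two real variables $a = |\add_V|$ and $b = |\add_C|$, then optimize. First I would set $a = |\add_V|$ and $b = |\add_C|$, so that $\lVert \add \rVert = \frac{a}{\Delta_C} + \frac{b}{\Delta_V}$ by Definition~\ref{def:wtd_norm}, and the target inequality becomes
\[
 ab \;\le\; \frac{1}{4}\,\Delta_C\Delta_V\left(\frac{a}{\Delta_C} + \frac{b}{\Delta_V}\right) \;=\; \frac{1}{4}\left(a\,\Delta_V + b\,\Delta_C\right).
\]
Meanwhile the locally-reduced hypothesis $a + b \le (\Delta_V - a) + (\Delta_C - b)$ rearranges to $2a + 2b \le \Delta_V + \Delta_C$, i.e. $a + b \le \tfrac12(\Delta_V + \Delta_C)$. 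Note also the trivial bounds $a \le \Delta_V$ and $b \le \Delta_C$ coming from $\add \subseteq \supp(\gen)$ and the structure of $\supp(\gen)$ (the $V$-part of the support has size $\Delta_V$, the $C$-part size $\Delta_C$).

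The key step is to prove $4ab \le a\Delta_V + b\Delta_C$ using only $a \le \Delta_V$, $b \le \Delta_C$ (and possibly $a+b \le \tfrac12(\Delta_V+\Delta_C)$). I would try the clean route first: since $b \le \Delta_C$ we have $ab \le a\Delta_C$ — no, that points the wrong way. Better: write $a\Delta_V + b\Delta_C - 4ab$ and try to show it is nonnegative. A natural idea is the substitution $a = \Delta_V x$, $b = \Delta_C y$ with $x,y \in [0,1]$; the inequality becomes $\Delta_V\Delta_C(x + y - 4xy) \ge 0$, i.e. $x + y \ge 4xy$. But $x = y = 1$ violates this, so the locally-reduced constraint is genuinely needed: it gives $\Delta_V x + \Delta_C y \le \tfrac12(\Delta_V + \Delta_C)$. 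I expect the cleanest argument is via AM–GM in the right form: from $2a \le \Delta_V + \Delta_C - 2b$ one could bound $4ab = 2b\cdot 2a \le 2b(\Delta_V + \Delta_C - 2b)$, and then it remains to check $2b(\Delta_V + \Delta_C - 2b) \le a\Delta_V + b\Delta_C$, which still has an $a$ on the right — so instead I would symmetrize. The most robust approach: by AM–GM, $4ab \le (a + b)^2 \le \left(\tfrac12(\Delta_V+\Delta_C)\right)\!(a+b)$ using the locally-reduced bound on one factor of $(a+b)$; and then I need $(a+b)\cdot\tfrac12(\Delta_V+\Delta_C) \le a\Delta_V + b\Delta_C$, i.e. $a\Delta_C + b\Delta_V \le a\Delta_V + b\Delta_C$, i.e. $(a-b)(\Delta_C - \Delta_V) \le 0$. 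This is false in general (when $a > b$, since $\Delta_C > \Delta_V$), so this particular chain fails and I would not pursue it.

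Given that, the main obstacle is handling the asymmetry between $\Delta_V$ and $\Delta_C$ correctly; a brute-force but reliable plan is: treat $f(a,b) = a\Delta_V + b\Delta_C - 4ab$ as affine in $a$ for fixed $b$ (and vice versa), so its minimum over the polytope $\{0 \le a \le \Delta_V,\ 0 \le b \le \Delta_C,\ 2a+2b \le \Delta_V+\Delta_C\}$ is attained at a vertex; then check $f \ge 0$ at each of the finitely many vertices (e.g. $(0,0)$, $(\Delta_V,0)$, $(0,\Delta_C)$, $(\Delta_V, \tfrac{\Delta_C-\Delta_V}{2})$, $(\tfrac{\Delta_C-\Delta_V}{2}, \Delta_V)$ if $\Delta_C \ge \Delta_V$, and $(\tfrac{\Delta_V+\Delta_C}{4},\tfrac{\Delta_V+\Delta_C}{4})$ if that point is feasible). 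At each vertex the inequality should reduce to something elementary like $(\Delta_C - \Delta_V)^2 \ge 0$ or $\Delta_V\Delta_C \ge 0$. I expect this vertex-enumeration argument to go through cleanly once the polytope is pinned down, and that is the approach I would write up, keeping the case analysis short by exploiting the affine-in-each-variable structure so that only corner points need checking.
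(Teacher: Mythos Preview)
Your reduction to $4ab \le a\Delta_V + b\Delta_C$ is correct, and exploiting the affine-in-each-variable structure to reduce to a vertex check is a sensible strategy, different from the paper's route: the paper rewrites the target as $a\bigl(\tfrac{\Delta_V}{2}-b\bigr) + b\bigl(\tfrac{\Delta_C}{2}-a\bigr) \ge 0$ and then replaces each coefficient $a,b$ by $\min(a,b)$ to obtain $\min(a,b)\cdot\bigl[(\tfrac{\Delta_V}{2}-b)+(\tfrac{\Delta_C}{2}-a)\bigr]$, noting that the bracket is nonnegative by the locally-reduced hypothesis.

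Your plan, however, has two concrete problems. First, the box constraints are swapped: the $V$-part of $\supp(\gen)$ is $\Gamma(c)\times\{v\}$, of size $\Delta_C$, and the $C$-part is $\{c\}\times\Gamma(v)$, of size $\Delta_V$, so the structural bounds are $a=|\add_V|\le\Delta_C$ and $b=|\add_C|\le\Delta_V$, not the reverse. Second, and more seriously, the vertex check you anticipate does \emph{not} go through. At the feasible point $(a,b)=\bigl(\tfrac{\Delta_C-\Delta_V}{2},\,\Delta_V\bigr)$ (on the boundary $a+b=\tfrac{\Delta_V+\Delta_C}{2}$ with $b$ at its maximum) one computes
\[
a\Delta_V + b\Delta_C - 4ab \;=\; \frac{\Delta_V}{2}\,(3\Delta_V - \Delta_C),
\]
which is negative whenever $\Delta_C > 3\Delta_V$; e.g.\ $\Delta_V=4$, $\Delta_C=20$, $a=8$, $b=4$ gives $4ab=128>112=a\Delta_V+b\Delta_C$. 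So the inequality you set out to prove is actually false in this regime, and no enumeration will establish it. (The paper's own step shares this defect: passing from $aX+bY$ to $\min(a,b)(X+Y)$ is a valid lower bound only when the bracket attached to $\max(a,b)$ is nonnegative; in the case $a>b$ that would need $b\le\Delta_V/2$, which local reduction alone does not force, and the same numerical example applies.)
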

\begin{proof}
    We can write the claim as 
    \begin{align}
        |\add_V| \left( \frac{\Delta_V}{2} - |\add_C| \right) + |\add_C| \left( \frac{\Delta_C}{2} - |\add_V| \right) \geq 0~.
    \end{align}
    Consider the LHS, we can write it as 
    \begin{align}
        \min(|\add_V|, |\add_C|) \cdot \left[ \left( \frac{\Delta_V}{2} - |\add_C| \right) + \left(\frac{\Delta_C}{2} - |\add_V| \right) \right] \geq 0
    \end{align}
    By assumption, $\add$ is locally reduced and this implies
    \begin{align}
        |\add_V| + |\add_C| \leq \frac{1}{2} \left( \Delta_V + \Delta_C \right)~.
    \end{align}
    Furthermore, $\min(|\add_V|,|\add_C|)$ is non-negative.
\end{proof}

\section{Viderman's algorithm for Hypergraph Product Codes}\label{sec:HPCssflip}

In this section, we present a quantum version of Viderman's algorithm for hypergraph product codes, and prove that it is correct.
The objective of this section is to prove Theorem~\ref{thm:main}.

Our algorithm, $\ssfind$, is presented in Section~\ref{subsec:statement-ssfind}.  Given a set of $Z$-type errors $\err \subseteq \qubits$, $\ssfind$ produces an envelope $\env$ such that it contains the reduced version of the error $\err$.
In Section~\ref{subsec:coverage}, we prove that $\env$ contains the reduced version of $\err$.
In Section~\ref{subsec:bounded-L}, we prove that when $\err$ is small enough, $\env$ does not grow too much.

For the entirety of this section, we assume we have a quantum expander code that is constructed from a graph $\cG$, that is the graph product of two copies of a graph $G$ that is a $(\Delta_V,\Delta_C)$-biregular, $(\alpha_V,\epsilon_V,\alpha_C,\epsilon_C)$-bidirectional expander as described in Section~\ref{subsec:background-quantum-expander}.
For clarity, we denote the two copies as $G_1=(V_1\cup C_1,E_{1}),G_2 =(V_2\cup C_2,E_{2})$.
 
\subsection[ssfind]{$\ssfind$}
\label{subsec:statement-ssfind}

We begin by recalling from Section~\ref{sec:techoverview} why $\find$ (Algorithm~\ref{alg:find-gentle}, the classical version of Viderman's algorithm) does not extend directly to the quantum setting.
In each iteration, the classical algorithm $\find$ adds a single bit to the envelope $\env$ if it has a lot of overlap with the set of untrustworthy checks $\sus$.
To be precise, the $\find$ algorithm adds a bit $v$  to the envelope if at least $(1-2\epsilon_V)\cdot \Delta_V$ of the neighbors of $v$ are in $\sus$.
Suppose we encounter the example in Figure \ref{fig:viderman-fail} (a reprise of Figure~\ref{fig:viderman-fail-intro}).
There are errors only within the support of a single generator $\gen \sim (c,v) \in \gens$, and these errors are in both $V_1\times V_2$ qubits and in $C_1\times C_2$ qubits.
The shaded rectangles indicate $\unsat$.
Parity checks that are adjacent to two errors are satisfied, and therefore no error qubit has $(1-2\epsilon_V)\Delta_V$ unsatisfied parity checks adjacent to it. If we reduce the threshold of the $\find$ algorithm to cover errors in these cases, we would have to reduce it from $(1-2\epsilon_V)\Delta_V$ to $1/2\Delta_V$, 

and then the envelope might end up growing uncontrollably and cover the entire set of qubits.

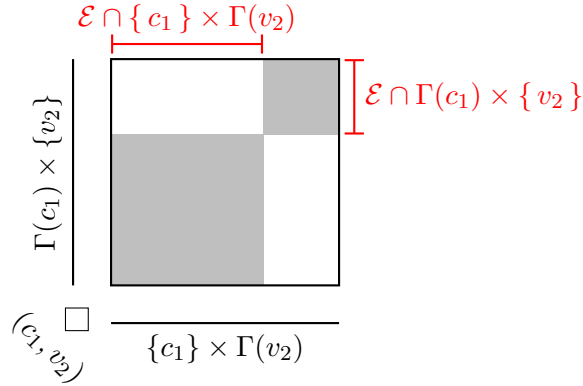
\begin{figure}[h]
    \centering
    \begin{tikzpicture}
        \fill[gray!50] (0,0) rectangle (2,2);
        \fill[gray!50] (2,2) rectangle (3,3);
        \draw[thick] (0,0) rectangle (3,3);

        \draw[thick] (0,-0.5)--node[below] {$\{c_1\} \times \nbhd(v_2)$}(3,-0.5);
        \draw[thick] (-0.5,0)--node[above,rotate=90] {$\nbhd(c_1) \times \{v_2\}$}(-0.5,3);
        \draw[|-|,thick,red] (0,3.2)--node[above] {$\err \cap \set{c_1}\times\nbhd(v_2)$}(2,3.2);
        \draw[|-|,thick,red] (3.2,3) -- node[right]{$\err\cap \nbhd(c_1)\times\set{v_2}$} (3.2,2);
        \draw (-0.3,-0.3) rectangle (-0.6,-0.6) node[below,rotate=-45] {$(c_1,v_2)$};
    \end{tikzpicture}
    \caption{The neighborhood of a generator $(c_1,v_2)$. Error is marked in red and unsatisfied parity checks are shaded.}
    \label{fig:viderman-fail}
\end{figure}

As discussed in Section~\ref{sec:techoverview},
we overcome this problem by taking into consideration neighborhoods of \emph{sets} of qubits rather than the neighborhood of one qubit at a time.
Let $\supp(\gen)$ be the support of $\gen \in \gens$.
In Definition~\ref{def:score}, we define a score over sets $\add \subset \supp(\gen)$ that, as discussed in Section~\ref{sec:techoverview}, accounts for the fact that parity checks that witness \emph{both} VV errors and CC errors can be satisfied; that is, ``cancellations'' can occur.
The score is then used in the algorithm $\ssfind$ (Algorithm~\ref{alg:ssfind}) to identify ``suspicious'' qubits. 

\begin{definition}
\label{def:reduced-errs}
For each generator $\gen \in \gens$, we define the collection of locally reduced sets in the neighborhood of $\gen$ to be
\begin{align*}
    \minsets(\gen) &= \left\{\add \subseteq \supp(\gen)  \; \middle| \;  \; |\add_V| + |\add_C| \leq \frac{1}{2}(\Delta_V + \Delta_C)\right\}~.
\end{align*}
Let $\minsets(\gens) = \bigcup_{\gen \in \gens} \minsets(\gen)$ be the union over locally reduced sets across all generators.
\end{definition}

The $\ssfind$ algorithm will maintain an envelope $\env$ of qubits it believes are suspicious, and a set of parity checks $\sus$ that are also suspicious.  The algorithm labels a parity check as suspicious whenever it touches a suspicious qubit, and thus we always have $\qnbhd(\env) \subset \sus$.
In each iteration, $\ssfind$ searches over $\add \in \minsets(\gens)$ and evaluates their \emph{score}, which we formally define below.  If the score of a set $\add$ is small, it is considered suspicious. See Section~\ref{sec:techoverview} for a discussion of the intuition behind this score function.

\begin{definition}[Score]
\label{def:score}
    Let $\sus \subseteq \cX$ be a set of parity checks and let $\sus^c$ be its complement.
    For $\add \in \minsets(\gens)$, we define the \emph{score} of $\add$ as
    \begin{align*}
        \score(\add) &= \frac{\lvert \uqnbhd(\add) \cap \sus^c \rvert}{\Delta\lVert \add \rVert}~.
    \end{align*}
\end{definition}
We note that the score of a set $\add$ depends on the set of suspected parity checks $\sus$.
However, we suppress this dependence in the notation for readability.
While the set $\sus$ changes over the course of the algorithm, the score of subsets of qubits $\add$ within the support of a generator $\gen$ only changes if the set of parity checks in the local view are added to $\sus$.

To illustrate, we use an example and corresponding Figure~\ref{fig:view-def}.
\begin{figure}[h]
    \centering
    \begin{tikzpicture}
   
        \fill[gray!50] (2,1) rectangle (3,3);
        \draw[thick,gray,|-|] (3.2,0)--node[right] {$\add_V$} (3.2,1);
        \fill[gray!50] (0,0) rectangle (2,1);
        \draw[thick,gray,|-|] (2,3.2)--node[above] {$\add_C$} (3,3.2);
        \draw[pattern=north west lines] (2,0) rectangle (3,1);
        \draw[thick] (0,0) rectangle (3,3);
             
        \draw[thick] (0,-0.5)--node[below] {$\{c_1\} \times \nbhd(v_2)$}(3,-0.5);
        \draw[thick] (-0.5,0)--node[above,rotate=90] {$\nbhd(c_1) \times \{v_2\}$}(-0.5,3);
        \draw (-0.3,-0.3) rectangle (-0.6,-0.6) node[below,rotate=-45] {$(c_1,v_2)$};
    \end{tikzpicture}
    \caption{The local view of a generator $\gen \sim (c_1,v_2)$.
    $\add_V$ and $\add_C$ form the reduced set $\add$.
    The gray portions correspond to $\uqnbhd(\add)$.
    The hatched portion corresponds to $\mqnbhd(\add)$.}
    \label{fig:view-def}
\end{figure}
First, $\add$ is chosen such that it is locally reduced.
If $\add$ is the only error in the code, the syndromes would correspond to the union of the gray rectangles.

In Algorithm~\ref{alg:ssfind}, we present $\ssfind$.
Let $\err \subseteq \qubits$ be the error and $\unsat$ denote the set of unsatisfied parity checks.
Recall that we denote $\epsilon = \max(\epsilon_V,\epsilon_C)$.
The algorithm $\ssfind$ takes as input $\unsat$.
It outputs an envelope $\env\subset\qubits$ that are suspected errors.

\begin{algorithm}[H]
    \begin{algorithmic}[0]
        \State \textbf{Input:} $\unsat \subseteq \pcs$
        \State \textbf{Output:} $\env \subseteq \qubits$.
    \end{algorithmic}
    \begin{algorithmic}[1]
        \State $\env \leftarrow \emptyset$
        \State $\sus \leftarrow \unsat$
        \State $\add \leftarrow \minsets(\gens)$
        \While{$\exists \add \in \minsets(\gens)$ such that $\score(\add) \leq 2\epsilon $}
            \State $\env \leftarrow \env \union \add$. \label{step:add}
            \State $\sus \leftarrow \sus \union \qnbhd(\add)$
            \For{$\add' \in \minsets(\gens)$ such that $\add' \cap \add\neq\emptyset$}
                \State $\minsets(\gens) \leftarrow \minsets(\gens) \setminus \{\add'\}$
            \EndFor
        \EndWhile
        \State \Return $\env$.
    \end{algorithmic}
    \caption{$\ssfind$}
    \label{alg:ssfind}
\end{algorithm}
We remark that there exists an implementation of this algorithm that runs in time $\Theta(|\err|)$, see Section~\ref{sec:time} for more details.

\subsection{Coverage}
\label{subsec:coverage}

In this section, we prove that $\ssfind$ produces an envelope $\env$ that covers the reduced error $\err$.
To this end, we shall assume that it does not and arrive at a contradiction.

Without loss of generality, assume the error is already in its reduced form, i.e.\ $\red{\err} = \err$. 
Let $\res = \env \setminus \err$ be the residual errors that are not covered by the envelope when $\ssfind$ terminates.
For the sake of contradiction, we assume $\res$ is not empty.

We begin with some high-level intuition and then build on it to arrive at the proof.
The idea is to consider the set $\uqnbhd(\res)$, the unique neighbors of the residual errors.
The key observation is that elements of the set $\uqnbhd(\res)$ must be in $\sus$.

\begin{lemma}
    \label{lem:unbhd-res-to-sus}
    Let $\pc \in \pcs$ such that $\pc \sim (\nu_1,\zeta_2)$ is an element of $\uqnbhd(\res)$.
    Then $\pc \in \sus$.
\end{lemma}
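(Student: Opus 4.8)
The plan is to unwind the definitions of $\uqnbhd(\res)$ and of $\sus$ at the moment the algorithm terminates, and argue by cases. Recall $\res = \env \setminus \err$ and that $\err$ is assumed reduced. Fix $\pc \sim (\nu_1,\zeta_2) \in \uqnbhd(\res)$; by definition of the unique neighborhood there is exactly one qubit $\qubit \in \res \subseteq \env$ with $\pc \in \qnbhd(\qubit)$. Since $\qubit \in \env$, it was added to $\env$ at some iteration as part of some set $\add \in \minsets(\gens)$, and at that same iteration the algorithm executed $\sus \leftarrow \sus \cup \qnbhd(\add)$; because $\pc \in \qnbhd(\qubit) \subseteq \qnbhd(\add)$, this already gives $\pc \in \sus$. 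So this direction is essentially immediate once one is careful that ``$\qubit$ is the unique neighbor of $\pc$ in $\res$'' is compatible with $\qubit \in \env$ — which it is, because $\res \subseteq \env$.

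Wait — that argument shows $\uqnbhd(\res)\subseteq \qnbhd(\env) \subseteq \sus$ directly, which is too easy and suggests the statement is really meant to be used in the contrapositive-flavored coverage argument, where $\res$ denotes something more subtle (the residual \emph{error} $\err \setminus \env$, not $\env \setminus \err$; the excerpt's ``$\res = \env \setminus \err$'' looks like it should be $\res = \err \setminus \env$ given the surrounding text). So the real plan is: take $\pc \in \uqnbhd(\res)$ where $\res = \err \setminus \env$ is the uncovered part of the error, and show $\pc$ must have been flagged. Split on whether $\pc$ has a neighbor in $\err \setminus \res = \err \cap \env$. If it does, say $\pc \in \qnbhd(\qubit')$ for some $\qubit' \in \env$, then $\pc \in \qnbhd(\env) \subseteq \sus$ since the algorithm maintains the invariant $\qnbhd(\env) \subseteq \sus$ (it adds $\qnbhd(\add)$ to $\sus$ every time it adds $\add$ to $\env$). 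If $\pc$ has \emph{no} neighbor in $\err \cap \env$, then since $\pc$'s unique neighbor in $\res$ lies in $\err$, $\pc$ has exactly one neighbor in all of $\err$, i.e. $\pc \in \uqnbhd(\err) \subseteq \qnbhd(\err)$, and in particular $\pc$ appears in the syndrome $\unsat$ (a parity check touching an odd number — here exactly one — of error qubits is unsatisfied); since $\sus$ is initialized to $\unsat$ and only grows, $\pc \in \sus$.

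The one point needing care — and the likely main obstacle — is the parity/counting step in the second case: I need that a parity check adjacent to exactly one qubit of $\err$ is unsatisfied, i.e.\ $\uqnbhd(\err)\subseteq\unsat$. This follows from the definition $\unsat = \{\pc : |\supp(\pc)\cap\err| \not\equiv 0 \pmod 2\}$, but I should double-check that ``$\pc \in \qnbhd(\qubit)$'' is exactly ``$\qubit \in \supp(\pc)$'' in this graph (it is, by the symmetry of the incidence relation in $\cG_X$), and that having no neighbor in $\err\cap\env$ combined with a unique neighbor in $\res = \err\setminus\env$ genuinely forces $|\supp(\pc)\cap\err| = 1$ rather than some other odd number — which it does, since $\err = (\err\cap\env) \sqcup \res$ and the first part contributes $0$, the second exactly $1$. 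I'd also state explicitly the loop invariant $\qnbhd(\env)\subseteq\sus$ (proved by trivial induction on iterations: it holds initially as $\env=\emptyset$, and each iteration adds $\add$ to $\env$ and $\qnbhd(\add)$ to $\sus$), since both cases lean on it. With those observations in place the lemma falls out in a few lines.
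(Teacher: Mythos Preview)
Your proposal is correct and essentially identical to the paper's argument: both split on whether $\pc$ has a neighbor in $\env$ (equivalently, in $\err\cap\env$), concluding $\pc\in\qnbhd(\env)\subseteq\sus$ in one case and $\pc\in\unsat\subseteq\sus$ in the other via $\err=(\err\cap\env)\sqcup\res$. You also correctly spotted that the paper's line ``$\res=\env\setminus\err$'' is a typo for $\res=\err\setminus\env$, as the surrounding text and proof make clear.
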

\begin{proof}
    If $\pc \in \uqnbhd(\res)$, then by definition, it is incident to exactly one qubit in $\res$.
    This leaves two choices.
    If this parity check is not connected to an element from $\env$, then it must be connected only to a single error (as $\err = (\err\cap\env)\sqcup \res$), and therefore must be in $\unsat \subseteq \sus$ .
    On the other hand, if it \emph{is} connected to an element from $\env$, then it must be in $\qnbhd(\env)\subset\sus$.
\end{proof}
Intuitively, if $\uqnbhd(\res)$ is large, then there are elements in $\res$ such that their neighborhood has a large overlap with $\sus$.
In particular, we will show that $\uqnbhd(\res)$ has a large overlap with the neighborhood of a generator $\gen \in \gens$.
This will imply that $\gen$ contains a portion of $\res$ with a large score.
In turn, this will lead to a contradiction of the assumption that the $\ssfind$ algorithm terminated.

The following lemma closely resembles Lemma 7 in \cite{leverrier2015quantum}.

\begin{lemma}
\label{lem:coverage}
   The output $\env$ of $\ssfind$ covers the set $\err$, i.e.\ $\err \subseteq \env$ if $|\err_V| \leq  \alpha_V n$ and $|\err_C| \leq \alpha_C m$.
\end{lemma}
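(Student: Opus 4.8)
The plan is to argue by contradiction: assume the residual set $\res = \err \setminus \env$ is nonempty, and derive a locally reduced subset $\add$ of some generator $\gen$ with $\score(\add) \le 2\epsilon$, contradicting the termination condition of $\ssfind$. (Here I am correcting an apparent typo: $\res$ should be $\err\setminus\env$, not $\env\setminus\err$.) The starting point is Lemma~\ref{lem:unbhd-res-to-sus}, which tells us $\uqnbhd(\res) \subseteq \sus$, equivalently $\uqnbhd(\res) \cap \sus^c = \emptyset$. So if we can exhibit a generator $\gen$ and a locally reduced $\add \subseteq \supp(\gen) \cap \res$ with $\Delta \lVert \add \rVert > 0$, then automatically $\score(\add) = \lvert \uqnbhd(\add) \cap \sus^c\rvert / (\Delta\lVert\add\rVert) = 0 \le 2\epsilon$ — wait, this is too easy, so the subtlety must be that a single generator need not contain \emph{all} of $\res$, and more importantly that $\uqnbhd(\add)$ for $\add \subseteq \res$ is a superset of $\uqnbhd(\res) \cap (\text{local view})$ only up to neighbors of $\res \setminus \add$. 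So the real content is: find $\add$ inside a single generator's support such that $\add$ is locally reduced, so that $\add$ was a legal candidate in $\minsets(\gens)$ that the algorithm should have caught. The key point is that since $\uqnbhd(\res)\subseteq\sus$, parity checks that are unique neighbors of $\add$ but \emph{not} unique neighbors of all of $\res$ could still lie in $\sus^c$; these are checks hit by $\add$ and also by $\res\setminus\add$. So I need to choose $\add$ to control the count of such checks.

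First I would set up the projections: write $\res_V \subseteq \qubits_V = V_1\times V_2$ and $\res_C \subseteq \qubits_C = C_1\times C_2$, and use the hypothesis $|\err_V| \le \alpha_V n$, $|\err_C| \le \alpha_C m$ (so the relevant projections are within the expansion regime) together with bidirectional expansion of $G$ applied row-by-row and column-by-column, as flagged in the technical overview. The goal is a counting/averaging statement: because $\res$ is reduced (being a subset of the reduced error $\err$ — I should check that subsets of reduced sets behave well, or restrict attention to the coset structure), the unique neighborhood $\uqnbhd(\res)$ is large relative to $\lVert\res\rVert$. Then by averaging over generators $\gen$ whose support meets $\res$, there must exist a generator $\gen \sim (c_1,v_2)$ such that $\res \cap \supp(\gen)$ captures a good fraction of local unique neighbors. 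Within that local view $\nbhd(c_1)\times\nbhd(v_2) \subseteq \pcs$, I would take $\add = \res \cap \supp(\gen)$, or a locally reduced representative of it modulo the toggling relation (using Lemma~\ref{lem:reduced-nbhd} to control $|\add_V|\cdot|\add_C|$, which bounds $|\mqnbhd(\add)|$). The unique neighbors of $\add$ in this local view that fail to be unique neighbors of $\res$ are exactly those also hit by $\res$ outside $\supp(\gen)$ — but each qubit of $\add$ has its $\pcs$-neighbors entirely inside the local view $\nbhd(c_1)\times\nbhd(v_2)$, so such "leakage" checks come from $\res$-qubits in the same row $c_1$ or same column $v_2$; I'd bound their number using expansion of the single row and single column, which is where $\epsilon_V, \epsilon_C$ (hence $2\epsilon$) enter the threshold.

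Assembling: $\lvert\uqnbhd(\add)\cap\sus^c\rvert \le \lvert\uqnbhd(\add) \setminus \uqnbhd(\res)\rvert$ (by Lemma~\ref{lem:unbhd-res-to-sus}), and the right side is bounded by the number of collision checks between $\add$ and $\res\setminus\add$ within rows/columns of $\gen$, which by a $(1-2\epsilon)$-type unique-neighbor bound is at most $2\epsilon \cdot \Delta\lVert\add\rVert$ (after using $\Delta = \Delta_V\Delta_C$ and the definition of $\lVert\cdot\rVert$). Hence $\score(\add)\le 2\epsilon$, contradicting the fact that $\ssfind$ terminated — so $\res=\emptyset$ and $\err\subseteq\env$.

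The main obstacle I anticipate is the averaging step that produces a single "good" generator: the hypergraph product graph has poor global expansion, so I cannot simply say "$\uqnbhd(\res)$ is large, therefore some qubit sees many of it." Instead I expect to need the row/column decomposition of $\res$ — fixing $v_2$ and looking at $\res(v_2)\times\{v_2\}$ whose local $\pcs$-neighborhood is $\nbhd(\res(v_2))\times\{v_2\}$ — and to run the expansion argument on each fiber, then combine over fibers carefully, being vigilant that a check in $\nbhd(c_1)\times\nbhd(v_2)$ can be reached both from the row $c_1$ of $\res_C$ and the column $v_2$ of $\res_V$, so "unique neighbor of $\res$" is a genuinely two-dimensional condition. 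Handling this bookkeeping — and in particular choosing $\add$ so that it is locally reduced (so it was a legitimate element of $\minsets(\gens)$) while still inheriting enough unique neighbors — is the delicate part; everything else is the kind of expansion-plus-averaging calculation the paper promises to carry out in this subsection.
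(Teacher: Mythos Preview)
Your high-level strategy matches the paper's: contradict termination by exhibiting a locally reduced $\add \subseteq \res$ inside some $\supp(\gen)$ with $\score(\add) \le 2\epsilon$, using Lemma~\ref{lem:unbhd-res-to-sus} to get $\uqnbhd(\res) \subseteq \sus$. But the mechanism you propose for locating $\gen$ and choosing $\add$ has a genuine gap.

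Taking $\add = \res \cap \supp(\gen)$ and bounding $|\uqnbhd(\add)\setminus\uqnbhd(\res)|$ via ``expansion of the single row and single column'' does not work as stated: a check $(\nu_1,\zeta_2) \in \uqnbhd(\add)$ can be hit by a $\res_V$-qubit $(\nu_1,v_2')$ in a different column $v_2' \ne v_2$, or a $\res_C$-qubit $(c_1',\zeta_2)$ in a different row $c_1'\ne c_1$, and nothing local to the generator controls how many such collisions occur. The paper instead runs a \emph{two-stage} argument on projections in the base graph $G$. First, expand the projection $\resVtwo \subset V_2$ to find $v_2$ with $|\unbhd(\resVtwo)\cap\nbhd(v_2)|\ge(1-2\epsilon_V)\Delta_V$; this kills VV-type leakage for any check whose second coordinate lies in $\unbhd(\resVtwo)$. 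Then case-split. If no $\res_C$-qubit has second coordinate in $\unbhd(\resVtwo)\cap\nbhd(v_2)$ (the ``classical-like'' case), a singleton $\add=\{(\nu_1,v_2)\}$ already has score $\le 2\epsilon$. Otherwise (the ``quantum-like'' case), form the induced set $\indCone \subset C_1$ of interfering CC-errors, expand \emph{it} in $G$ to find $c_1$ with $|\unbhd(\indCone)\cap\nbhd(c_1)|\ge(1-2\epsilon_C)\Delta_C$, and set $\gen\sim(c_1,v_2)$. Crucially, $\add$ is not all of $\res\cap\supp(\gen)$ but its restriction to the unique-neighbor windows, so that every check of $\uqnbhd(\add)$ lying in the box $\unq = (\unbhd(\indCone)\cap\nbhd(c_1))\times(\unbhd(\resVtwo)\cap\nbhd(v_2))$ is a genuine unique neighbor of $\res$ and hence in $\sus$; the at most $2\epsilon_C\Delta_C|\add_C|+2\epsilon_V\Delta_V|\add_V|\le 2\epsilon\,\Delta\lVert\add\rVert$ checks of $\uqnbhd(\add)$ outside $\unq$ then give the score bound. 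The missing ideas are thus: (i) expand the projections $\resVtwo$ and $\indCone$ in $G$, not anything in the product graph; (ii) case-split on CC-interference; and (iii) restrict $\add$ to the unique-neighbor windows rather than taking all of $\res\cap\supp(\gen)$.
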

\begin{proof}
    Assume without loss of generality that $\res_V\neq\emptyset$. If $\res_V=\emptyset$, we can replace $V$ with $C$ and rows with columns. Let $\resVtwo\subset V_2$ be the projection of $\res$, as defined in Definition~\ref{def:proj}:
    \[\resVtwo = \set{v_2\in V_2 | \exists(\nu_1,v_2) \in\res }~.\]
    Notice that the set $\resVtwo\subset V_2$, which is a set in the graph $G_2$.
    
    As $|\resVtwo| \leq |\res| \leq |\err| \leq \min(\alpha_V n, \alpha_C m)$ and the graph $G_2$ is an $(\alpha_V,\epsilon_V)$ left vertex-expander,
    \begin{align}
        |\unbhd(\resVtwo)| \geq (1-2\epsilon_V)\Delta_V|\resVtwo|~.
    \end{align}
    This then implies that there exists a node $v_2 \in \resVtwo$ such that it has a large overlap with the unique neighborhood:
    \begin{align}
        |\unbhd(\resVtwo) \cap \nbhd(v_2)| \geq (1-2\epsilon_V)\Delta_V~.
    \end{align}
    We fix this node $v_2$ for the rest of the proof.

    Let $\resCtwo\subset C_2$ be as in Definition~\ref{def:proj}, i.e. $\resCtwo = \set{\zeta_2\in C_2 | \exists (c_1,\zeta_2)\in \res} $. We divide into two cases.
    
    \textbf{Classical-like case:}
    First, consider the simpler case where
    \begin{align}
    \label{eq:case1}
        \unbhd(\resVtwo) \cap \nbhd(v_2) \cap \resCtwo = \emptyset~.
    \end{align}
    This is the setting where most parity checks adjacent to a qubit of the form $(\nu_1,v_2) \in \res_V$ are not simultaneously adjacent to CC qubits.
    In this sense, the setting is like a classical decoding problem where we do not consider interference from CC qubits.
    
    Let $\nu_1 \in V_1$ be an element such that $(\nu_1,v_2) \in \res_V$, there must be one since $v_2\in\resVtwo$.
    Assuming Eq.\ \eqref{eq:case1}, it follows that for all $c_1 \in \nbhd(\nu_1)$,
    \begin{align}\label{eq:no-err}
        \unbhd(\resVtwo) \cap \nbhd(v_2) \cap \rescone = \emptyset,
    \end{align}
    where $\rescone$ is the projection of $\res$ on the row containing $c_1$, see Definition~\ref{def:proj}.
    This is because, by definition, $\rescone \subseteq \resCtwo$. Fix such $c_1\in\nbhd(\nu_1) $ and let $z$ be $(c_1,v_2)$.
    Equation (\ref{eq:no-err}) means that the neighborhood of the generator $\gen \in \gens$ is simple, as $\set{c_1}\times \unbhd(\resVtwo) \cap \nbhd(v_2) $ contains no elements from $\res$.
    See Figure~\ref{fig:localviewcase1}.

    \begin{figure}[h]
       \captionsetup{singlelinecheck=off}
    \centering
    \begin{tikzpicture}[scale=0.7]
        \fill[yellow!20!white] (0,0) rectangle (1,7);
        \fill[yellow!20!white] (6,0) rectangle (7,7);
        \draw (0,0) rectangle (7,7);
        \draw (-0.8,-0.8) rectangle (-1.2,-1.2) node[below,rotate=-45] {$(c_1,v_2)$};
        \draw (-1,0)--node[above,rotate=90] {$\nbhd(c_1) \times \{v_2\}$} (-1,7) (0,-1)-- node[below] {$\{c_1\} \times \nbhd(v_2)$}(7,-1);
        \fill[gray] (1,1.3) rectangle (6,0.7);
        \draw[fill=white] (-1,1) circle (0.25);
        \node at (-2.5,1) {$(\nu_1,v_2)$};
        \draw[|-|] (1,7.3) -- node[above] {$\unbhd(\resVtwo)$} (6,7.3);
        \draw (0,0) rectangle (1,7);
        \draw (6,0) rectangle (7,7);
    \end{tikzpicture}
    \caption[]{We find a generator $\gen \sim (c_1,v_2)$ such that $(\nu_1,v_2)\in\err_V$ is in its support and $\left[\set{c_1}\times \unbhd(\resVtwo) \cap \nbhd(v_2)\right] \cap\res = \emptyset $.
    We prove that the gray line is in $\sus$, i.e.\ $\set{\nu_1}\times(\unbhd(\errVtwo) \cap \nbhd(v_2))\subset\sus$.
    We ignore the parity checks corresponding to the yellow portions.
    We write $\unbhd(\resVtwo)$ for $\unbhd(\resVtwo) \cap \nbhd(v_2)$ for readability of the diagram.
    }
    \label{fig:localviewcase1}
    \end{figure}
    Let $\pc = (\nu_1,\zeta_2)$ be a parity check, where $\zeta_2\in \unbhd(\resVtwo) \cap \nbhd(v_2)$. We claim that $\pc$ sees only a single qubit in $\res$, the qubit $(\nu_1,v_2)$. The parity check $\pc$ cannot be adjacent to other elements of $\res_V$ because $\zeta_2 \in \unbhd(\errVtwo) \cap \nbhd(v_2)$. Furthermore, it cannot be adjacent to other elements of $\res_C$ because of (\ref{eq:no-err}).
    
    From Lemma~\ref{lem:unbhd-res-to-sus}, $\pc$ is an element of $\sus$.
    This implies that $|\uqnbhd(\{\nu_1,v_2\}) \cap \sus|$ is lower bounded:
    \begin{align}
        |\uqnbhd(\{\nu_1,v_2\}) \cap \sus|
        &\geq \left\lvert \{\nu_1\} \times \unbhd(\resVtwo) \cap \nbhd(v_2) \right\rvert\\
        &\geq (1-2\epsilon_V)\Delta_V~.
    \end{align}
    In other words, the score is upper bounded:
    \begin{align}
        \score(\{\nu_1,v_2\}) = \frac{|\uqnbhd(\{\nu_1,v_2\}) \cap \sus^c|}{\Delta_V} \leq 2\epsilon~.
    \end{align}
    This means that $\add = \{(\nu_1,v_2)\}$ would have been added to $\env$ and the algorithm could not have terminated.

    \textbf{Quantum-like case:}
    Suppose that the VV and CC portions of $\res$ interfere, i.e.\
    \begin{align}
        \indCtwo := \unbhd(\resVtwo) \cap \nbhd(v_2) \cap \resCtwo \neq \emptyset~.
    \end{align}
    We shall refer to $\indCtwo\subset C_2$ as the induced set.
    If $\indCtwo$ is not empty, then we can construct the set $\indCone$
    \begin{align}
        \indCone = \set{c_1\in C_1 | \exists \zeta_2 \in \indCtwo \text{ such that } (c_1,\zeta_2) \in \res_C}~.
    \end{align}
    The induced set, being a subset of $\resCone$, must be small
    \begin{align}
        |\indCone| \leq |\resCone| \leq |\res_C| \leq |\err_C| \leq \min(\alpha_V n, \alpha_C m)~.
    \end{align}
    Therefore, it has a large unique neighborhood in $G_1$, $|\unbhd(\indCone)| \geq (1-2\epsilon_C)\Delta_C|\indCone|$.
    Consequently, there must be some $c_1 \in \indCone$ such that
    \begin{align}
        |\unbhd(\indCone) \cap \nbhd(c_1)| \geq (1-2\epsilon_C)\Delta_C~.
    \end{align}
    We identify the generator $\gen= (c_1,v_2)$, and show that this generator has a large overlap with $\unbhd(\res)$.
    See Figure~\ref{fig:localviewcase2}.

    \begin{figure}[h]
    \captionsetup{singlelinecheck=off}
    \centering
    \begin{tikzpicture}[scale=0.7]
        \fill[yellow!20!white] (0,0) rectangle (1,7);
        \fill[yellow!20!white] (6,0) rectangle (7,7);
        \fill[yellow!20!white] (1,0) rectangle (6,1);
        \fill[yellow!20!white] (1,6) rectangle (6,7);
        \draw (0,0) rectangle (7,7);
        \draw (-0.8,-0.8) rectangle (-1.2,-1.2) node[below,rotate=-45] {$(c_1,v_2)$};
        \draw (-1,0)--(-1,7) (0,-1)--(7,-1);
        \fill[gray] (2,1) rectangle (6,2);
        \draw[|-|] (1.0,-1.3)--node[below] { \large $\add_C$} (2.1,-1.3);
        \fill[gray] (1,2) rectangle (2,6);
        \fill[pattern=north west lines](1,2) rectangle (2,1);
        \draw[|-|] (-1.3,1.0) --node[above,rotate=90] { \large $\add_V$} (-1.3,2.1);
        \draw[black] (1,0)--(1,7);
        \draw[black] (6,0)--(6,7);
        \draw[|-|] (7.3,0.9) --node[below,rotate=90] {$\unbhd(\indCone)$} (7.3,6.1);
        \draw[black] (0,1)--(7,1);
        \draw[black] (0,6)--(7,6);
        \draw[|-|] (0.9,7.3) --node[above] {$\unbhd(\resVtwo)$} (6.1,7.3);
        \draw (0,0) rectangle (1,7);
        \draw (6,0) rectangle (7,7);
        \draw (0,0) rectangle (7,1);
        \draw (0,6) rectangle (7,7);
    \end{tikzpicture}
    \caption[]{$\add_V,\add_V\subset\err$, and the parity checks in $\unq = \unbhd(\resVtwo)\times \unbhd(\indCone)$ only interact with errors in $\supp(\gen)$, the support of $\gen$.
    We show that the parity checks in gray, $\uqnbhd(\add)\cap\unq$, are in $\sus$.
    We ignore the parity checks corresponding to the yellow portions.
    We write $\unbhd(\resVtwo)$ for $\unbhd(\resVtwo) \cap \nbhd(v_2)$ and $\unbhd(\indCone)$ for $\unbhd(\indCone) \cap \nbhd(c_1)$ for readibility of the diagram.
}
    \label{fig:localviewcase2}
    \end{figure}

    Consider the set $\add = \add_V \sqcup \add_C$, where
    \begin{align}
     \add_V := \left(\resvtwo \cap \unbhd(\resCone) \cap \nbhd(c_1)\right) \times \{v_2\}~, \qquad
     \add_C := \{c_1\} \times \left(\rescone \cap \unbhd(\resVtwo) \cap \nbhd(v_2) \right)~.
    \end{align}
    Note that all elements in $\add$ are errors, from the definition of $\resvtwo,\rescone$, see Definition~\ref{def:proj}.

    We claim that the set $\add_C$ is not empty.
    By construction, $c_1 \in \indCone$, and therefore, for some $\zeta_2 \in \indCtwo$, we must have $(c_1,\zeta_2)\in\res_C$. By the definition of $ \indCtwo$, this means that $\zeta_2\in \unbhd(\resVtwo)\cap \nbhd(v_2)$.

    As in the classical-like case, we want to show that most neighbors of $\add$ are unique neighbors of $\res$ and therefore in $\sus$. 
    We study this generator using the set of parity checks $\unq = \unbhd(\indCone) \times \unbhd(\resVtwo)$, the product of unique neighborhoods. 
    \begin{claim}
        Let $\pc \sim (\nu_1,\zeta_2)$ be a parity check in $\uqnbhd(\add)$ within the set $\unq$,
        then $\pc \in \sus$.
    \end{claim}
    \begin{claimproof}
        As $\pc \in \uqnbhd(\add)$, it can only be adjacent to $\add_V$ or $\add_C$ but not to both.
        Without loss of generality, suppose it is adjacent to $\add_C$.
        This implies that $\zeta_2 \in \unbhd(\resVtwo) \cap \nbhd(v_2)$ and $\nu_1 \in \unbhd(\indCone) \cap \nbhd(c_1) \setminus \resvtwo$.
        
        The parity check $\chi$ cannot be adjacent to other elements of $\res_C$.
        This is because $\nu_1 \in \unbhd(\indCone) \cap \nbhd(c_1)$.
        If, for the sake of contradiction, there was a $c_1' \in C_1$ such that $c_1' \in \nbhd(\nu_1)$ and $(c_1',\zeta_2) \in \res_C$.
        Then $c_1'$ would also be in $\indCone$ and $\nu_1$ would no longer be a unique neighbor of $\indCone$.

        The parity check $\chi$ cannot be adjacent to other elements of $\res_V$. 
        This is because $\zeta_2 \in \unbhd(\resVtwo) \cap \nbhd(v_2)$, and $v_2\in \resVtwo$.

        Therefore, $\pc \in \unbhd(\res)$ and by Lemma~\ref{lem:unbhd-res-to-sus}, $\pc \in \sus$.
    \end{claimproof}
    This claim implies that
    \begin{align}
    \label{eq:overlap-unbhd}
        \uqnbhd(\add) \cap \unq\subset\sus~.
    \end{align}
    As the unique neighborhood has a large footprint within the generator $\gen$, only an $\epsilon$ fraction of parity checks remain outside the overlap with $\unbhd(\indCone)\times \unbhd(\resVtwo)$, i.e.\
    \begin{align}
    \label{eq:setminus-unbhd}
        \left| \uqnbhd(\add) \setminus \unq \right| \leq
        2\epsilon_C \Delta_C |\add_C| + 2\epsilon_V\Delta_V |\add_V|~.
    \end{align}

    Together, Eq.\ \eqref{eq:overlap-unbhd} and Eq.\ \eqref{eq:setminus-unbhd} imply
    \begin{align}
        |\uqnbhd(\add) \cap \sus| \geq |\uqnbhd(\add) \cap \unq|\geq |\uqnbhd(\add)| - 2\epsilon_V\Delta_V|\add_V| - 2\epsilon_C\Delta_C|\add_V|~,
    \end{align}
    which implies that
    \begin{align}
        \score(\add) = \frac{|\uqnbhd(\add) \cap \sus^c|}{\Delta \lVert \add \rVert} \leq 2\epsilon~.
    \end{align}
    This completes the proof.
\end{proof}

\subsection{Bounding the size of the envelope}\label{subsec:bounded-L}
In the previous section, we have shown that the envelope $\env$ will cover the error.
However, this does not guarantee that the envelope will stay small: if it ends up, say, covering cover \emph{all} the qubits, then our algorithm will not be very useful.
In this section, we show that the envelope $\env$ stops growing such that its total size is bounded.

Following the high-level outline of the proof in the classical case (see Appendix~\ref{sec:classic-code}),
we bound the size of $\env$ by lower and upper bounding $\norm{\qnbhd(\env)}$.
The lower bound comes from the graph expansion.
For the upper bound, we note that in each iteration, the set $\add$ that is added to $\env$ must have a large intersection with $\sus$.
The set $\sus$ contains unsatisfied parity checks and $\qnbhd(\env)$.
Once all the unsatisfied parity checks are covered by $\env$, every new set $\add$ we add to $\env$ has a large overlap with $\env$.
Therefore, this reduces the expansion of the set $\qnbhd(\env)$ (or equivalently, the ratio $|\qnbhd(\env)|/|\env|$ can only reduce once all $\unsat$ have been added).

\subsubsection{Lower bounding the expansion}
The lower bound on $\norm{\qnbhd(\env)}$ comes from the expansion of the base graph $G$.
\begin{lemma}
    \label{lem:max-expanding-size}
    Let $\qset\subset \qubits$ be a set such that $|\qset_V| \leq \alpha_Vn$ and $|\qset_C| \leq \alpha_C m$.
    Then
    \begin{align}
        \norm{\qnbhd(\qset)}\geq \frac{1}{2}(1-\epsilon)\Delta\lVert \qset \rVert~.
    \end{align}
\end{lemma}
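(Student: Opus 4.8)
The plan is to first strip the weighted norm down to a plain cardinality inequality, then bound the contributions of $\qset_V$ and $\qset_C$ to $\qnbhd(\qset)$ separately, each by applying the one-dimensional expansion of $G$ \emph{inside $\cG_X$}, one direction at a time.

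First I would unwind Definition~\ref{def:wtd_norm}: since $\Delta = \Delta_C\Delta_V$, the claimed bound is equivalent to
\[
  \lvert \qnbhd(\qset) \rvert \;\geq\; \tfrac12(1-\epsilon)\bigl(\Delta_V\lvert\qset_V\rvert + \Delta_C\lvert\qset_C\rvert\bigr),
\]
where I am reading the weighted norm of a set of parity-checks as its cardinality (parity-checks carry unit weight). Note the right-hand factor $\Delta_V\lvert\qset_V\rvert + \Delta_C\lvert\qset_C\rvert$ is exactly $\sum_{q\in\qset}\lvert\qnbhd(q)\rvert$ by the degree count in Definition~\ref{def:qgraph}, i.e.\ the neighborhood of $\qset$ counted with multiplicity, so the statement just asserts that passing from ``with multiplicity'' to ``without multiplicity'' loses at most a factor $\tfrac12(1-\epsilon)$.

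Next I would bound $\lvert\qnbhd(\qset_V)\rvert$ by slicing $\pcs = V_1\times C_2$ into rows $\{\nu_1\}\times C_2$. Because $\qnbhd(\{\nu_1,v_2\}) = \{\nu_1\}\times\nbhd(v_2)$, the part of $\qnbhd(\qset_V)$ in row $\nu_1$ is $\{\nu_1\}\times\nbhd(S_{\nu_1})$ for the row slice $S_{\nu_1} = \{v_2 : (\nu_1,v_2)\in\qset_V\}\subseteq V_2$ (a projection in the sense of Definition~\ref{def:proj}). Since $\lvert S_{\nu_1}\rvert \le \lvert\qset_V\rvert \le \alpha_V n = \alpha_V\lvert V_2\rvert$ and the copy $G_2$ of $G$ is an $(\alpha_V,\epsilon_V)$ left vertex-expander (Definition~\ref{def:exp}), $\lvert\nbhd(S_{\nu_1})\rvert \ge (1-\epsilon_V)\Delta_V\lvert S_{\nu_1}\rvert \ge (1-\epsilon)\Delta_V\lvert S_{\nu_1}\rvert$. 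Distinct rows contribute disjoint parity-checks and $\sum_{\nu_1}\lvert S_{\nu_1}\rvert = \lvert\qset_V\rvert$, so summing over $\nu_1\in V_1$ gives $\lvert\qnbhd(\qset_V)\rvert \ge (1-\epsilon)\Delta_V\lvert\qset_V\rvert$. Symmetrically, slicing $\pcs$ into columns $V_1\times\{\zeta_2\}$ and using $\qnbhd(\{c_1,\zeta_2\}) = \nbhd(c_1)\times\{\zeta_2\}$ together with the fact that $G_1$ is an $(\alpha_C,\epsilon_C)$ right vertex-expander (the column slices $\{c_1 : (c_1,\zeta_2)\in\qset_C\}\subseteq C_1$ have size at most $\lvert\qset_C\rvert \le \alpha_C m = \alpha_C\lvert C_1\rvert$) yields $\lvert\qnbhd(\qset_C)\rvert \ge (1-\epsilon)\Delta_C\lvert\qset_C\rvert$. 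Finally, since $\qnbhd(\qset) = \qnbhd(\qset_V)\cup\qnbhd(\qset_C)$,
\[
  \lvert\qnbhd(\qset)\rvert \;\ge\; \max\bigl(\lvert\qnbhd(\qset_V)\rvert,\,\lvert\qnbhd(\qset_C)\rvert\bigr) \;\ge\; \tfrac12\bigl(\lvert\qnbhd(\qset_V)\rvert + \lvert\qnbhd(\qset_C)\rvert\bigr) \;\ge\; \tfrac12(1-\epsilon)\bigl(\Delta_V\lvert\qset_V\rvert + \Delta_C\lvert\qset_C\rvert\bigr),
\]
which is the claim.

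The only genuine subtlety—and the source of the factor $\tfrac12$—is that $\qnbhd(\qset_V)$ and $\qnbhd(\qset_C)$ can overlap almost completely inside $\pcs$ (for $\qset = \supp(\gen)$ both equal $\nbhd(c)\times\nbhd(v)$), since $\cG_X$ is itself a poor expander; hence one cannot add the two bounds and must settle for the $\max$. Everything else is bookkeeping: keeping track of which copy's left/right expansion is invoked—row slices are left sets of $G_2$, column slices are right sets of $G_1$—and checking the slices meet the size thresholds $\alpha_V n$, $\alpha_C m$, which is immediate as they are subsets of $\qset_V$, $\qset_C$. I do not expect any real obstacle; this is a warm-up lemma whose content is simply that expansion of $G$ lifts to $\cG_X$ one coordinate direction at a time.
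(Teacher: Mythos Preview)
Your proposal is correct and follows essentially the same approach as the paper: slice $\pcs$ into rows (resp.\ columns), apply the one-dimensional expansion of $G$ to each row slice of $\qset_V$ (resp.\ column slice of $\qset_C$), sum over rows/columns, then take the max and use $\max(a,b)\ge\tfrac12(a+b)$. The paper's proof is identical in structure and detail, including the observation that the factor $\tfrac12$ arises from the possible overlap of $\qnbhd(\qset_V)$ and $\qnbhd(\qset_C)$.
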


\begin{proof}
    The proof uses the fact that the input graph $G$ is an expander and that $\cG$ is the product of $G$.
    Every row indexed by $\nu_1$ in $\cG_\pcs$ is a copy of $G$.
    Therefore, if we restrict $\cG_\pcs$ to a row, we can inherit the expansion of $G$. 
    Similarly, every column indexed by $\zeta_2$ in $\cG$ is also a copy of $G$.
    
    Recall that for every $\nu_1\in V_1$,  $\qset(\nu_1)$ is the projection of $\qset$ on the line indexed by $\nu_1$:
    \[\qsetvone = \left\{ v_2\in V_2 | (\nu_1,v_2)\in \qset \right\}~.\]
    
    By assumption, $|\qset_V|\leq \alpha_V n$ and $|\qset_C| \leq \alpha_C m$.
    Therefore, each row $\qsetvone$ and each column $\qsetzetatwo$ is expanding.
    We can write
    \begin{align}
        |\qnbhd(\qset_V)|
        &=\sum_{\nu_1}|\{\nu_1\} \times \nbhd(\qsetvone)|\\
        &\geq \sum_{\nu_1} (1-\epsilon_V)\Delta_V |\qsetvone|\\
        &\geq (1-\epsilon_V)\Delta_V|\qset_V|~.
    \end{align}
    An identical argument works for $\qset_C$: $|\qnbhd(\qset_C)| \geq (1-\epsilon_C)\Delta_C|\qset_C|$.
    
    There might be an overlap between $\qnbhd(\qset_V)$ and $\qnbhd(\qset_C)$, therefore we lower bound $\qnbhd(\qset)$ by its maximum value,
    \begin{align}
      \norm{\qnbhd(\qset)}\geq& \max\set{|\qnbhd(\qset_V)|,|\qnbhd(\qset_C)| }
      \\\geq&\max\set{(1-\epsilon_V)\Delta_V|\qset_V|,(1-\epsilon_C)\Delta_C|\qset_C| }~.
    \end{align}
    By definition,  $\Delta\lVert \qset\rVert =  \Delta_V |\qset_V|+\Delta_C|\qset_C|$.
    Therefore $\max( \Delta_V |\qset_V|,\Delta_C|\qset_C|)\geq\frac{1}{2}\Delta\lVert \qset\rVert $.
    Using the fact that $\epsilon =\max(\epsilon_V,\epsilon_C)$, we finish the proof. 
\end{proof}

    As an aside, we give an example of a set of qubits $\qset$ such that $\norm{\qnbhd(\qset)} \leq \frac{1}{2}\Delta \lVert \qset\rVert$, showing that the bound in the lemma above is tight up to the $\epsilon$ factor.
    In particular, the factor of $1/2$ is unavoidable.
\begin{example}[Minimally expanding set]
    We define a set $\qset \subset \qubits$ by first choosing arbitrary sets $C_1' \subset C_1$ and $V_2' \subseteq V_2$ such that $|\qsetCone| \leq \alpha_C m$ and $|\qsetVtwo| \leq \alpha_V n$.
    
    Define $\qset =\qset_V\cup \qset_C $ when $\qset_V = \nbhd(C_1') \times V_2'$ and $\qset_C = C_1' \times \nbhd(V_2')$.
    Then we have that 
    \begin{align}
        \qnbhd(\qset_V) = \nbhd(C_1') \times \nbhd(V_2') = \qnbhd(\qset_C)~.
    \end{align}

    This means that
    $\norm{\qnbhd(\qset)} \leq \Delta_V\norm{\qset_V}$ and also $\norm{\qnbhd(\qset)} \leq \Delta_C\norm{\qset_C}$.
    Using the fact that $\Delta \lVert \qset\rVert = \Delta_V\norm{\qset_V}+\Delta_C\norm{\qset_C}$ we find $\norm{\qnbhd(\qset)}\leq\frac{1}{2}\Delta\lVert \qset\rVert$.
\end{example}
We remark that the algorithm might output an envelope $\env$ with expansion as in the example above. suppose that the error $\err$ is exactly half of the support of the generator.
Then when the algorithm terminates, the envelope will contain its entire support, as in the example.

\subsubsection{Upper bounding the expansion}
\begin{lemma}
\label{lem:upp-L-bound}
    Let $\env\subset \qubits$ be an envelope during the run of Algorithm \ref{alg:ssfind}. 
    Then
    \begin{align*}
        \norm{\qnbhd(\env)}\leq \norm{\unsat} + \left( \frac{1}{4} + 2\epsilon\right)\Delta\lVert \env \rVert
    \end{align*}
\end{lemma}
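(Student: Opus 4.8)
The plan is to bound $\lVert\qnbhd(\env)\rVert = |\qnbhd(\env)|$ by following the iterations of Algorithm~\ref{alg:ssfind} and summing up the ``new'' parity checks revealed at each step. Write $\env^{(0)} = \emptyset, \env^{(1)}, \dots, \env^{(T)}$ for the successive envelopes, where $\env^{(t)} = \env^{(t-1)} \cup \add^{(t)}$ and $\add^{(t)} \in \minsets(\gens)$ is the set chosen at iteration $t$; it suffices to treat $\env = \env^{(T)}$ for arbitrary $T$. Two structural facts should be recorded first. (i) The invariant $\sus^{(t)} = \unsat \cup \qnbhd(\env^{(t)})$ holds throughout, by a trivial induction on $t$ using the update rules of the algorithm. (ii) The sets $\add^{(1)}, \dots, \add^{(T)}$ are pairwise disjoint, because the inner \textbf{for}-loop removes from $\minsets(\gens)$ every set meeting the chosen $\add^{(t)}$; consequently $\lVert\env\rVert = \sum_t \lVert\add^{(t)}\rVert$, and likewise $|\env_V| = \sum_t |\add^{(t)}_V|$ and $|\env_C| = \sum_t |\add^{(t)}_C|$, since $\lVert\cdot\rVert$ is linear in these cardinalities.

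Telescoping $|\qnbhd(\env^{(t)})| = |\qnbhd(\env^{(t-1)})| + |\qnbhd(\add^{(t)}) \setminus \qnbhd(\env^{(t-1)})|$, it is enough to bound each term $|\qnbhd(\add^{(t)}) \setminus \qnbhd(\env^{(t-1)})|$. Fix $t$ and split $\qnbhd(\add^{(t)}) = \uqnbhd(\add^{(t)}) \cup \mqnbhd(\add^{(t)})$. For the multi-neighborhood: since $\add^{(t)} \subseteq \supp(\gen)$ for a single generator $\gen \sim (c_1,v_2)$, the product structure of $\cG$ shows that the multineighbors of $\add^{(t)}$ are exactly $\add^{(t)}_V \times \add^{(t)}_C$ (identified inside $\nbhd(c_1) \times \nbhd(v_2) \subseteq V_1 \times C_2 = \pcs$), so $|\mqnbhd(\add^{(t)})| = |\add^{(t)}_V| \cdot |\add^{(t)}_C| \leq \tfrac14 \Delta \lVert\add^{(t)}\rVert$ by Lemma~\ref{lem:reduced-nbhd} and the fact that $\add^{(t)}$ is locally reduced. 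For the unique neighborhood, use invariant (i) to split $\uqnbhd(\add^{(t)}) \setminus \qnbhd(\env^{(t-1)})$ into the part lying in $(\sus^{(t-1)})^c$, which has size at most $2\epsilon \Delta \lVert\add^{(t)}\rVert$ because $\score(\add^{(t)}) \leq 2\epsilon$ at the moment it is chosen, and the part $U_t := \uqnbhd(\add^{(t)}) \cap \unsat \setminus \qnbhd(\env^{(t-1)})$.

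It remains to control $\sum_t |U_t|$. The sets $U_t$ are pairwise disjoint: if $\chi \in U_s$ with $s < t$, then $\chi \in \qnbhd(\add^{(s)}) \subseteq \qnbhd(\env^{(s)}) \subseteq \qnbhd(\env^{(t-1)})$, so $\chi \notin U_t$. As each $U_t \subseteq \unsat$, this gives $\sum_t |U_t| \leq |\unsat|$. Combining,
\begin{align*}
|\qnbhd(\env)| = \sum_{t=1}^{T} \bigl|\qnbhd(\add^{(t)}) \setminus \qnbhd(\env^{(t-1)})\bigr| \leq \sum_{t=1}^{T}\Bigl(|U_t| + \bigl(\tfrac14 + 2\epsilon\bigr)\Delta\lVert\add^{(t)}\rVert\Bigr) \leq |\unsat| + \bigl(\tfrac14 + 2\epsilon\bigr)\Delta\lVert\env\rVert,
\end{align*}
which is the claim, recalling $\Delta\lVert\env\rVert = \Delta_V|\env_V| + \Delta_C|\env_C|$ and that $\lVert\cdot\rVert$ on a set of parity checks is its cardinality.

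\textbf{Main obstacle.} The genuinely delicate point is the accounting against $|\unsat|$. A naive step-by-step induction fails, since each newly added $\add$ can reveal fresh unsatisfied checks as unique neighbors and these would be charged to $|\unsat|$ repeatedly; the disjointness of the $U_t$ (which rests on the monotonicity $\env^{(s)} \subseteq \env^{(t-1)}$ and on $\qnbhd(\add^{(s)}) \subseteq \qnbhd(\env^{(s)})$) is exactly what lets us charge all of them to a single copy of $|\unsat|$. Equivalently, one may run an induction with the strengthened hypothesis $|\qnbhd(\env^{(t)})| \leq |\unsat \cap \qnbhd(\env^{(t)})| + (\tfrac14 + 2\epsilon)\Delta\lVert\env^{(t)}\rVert$. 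A secondary, routine point is verifying the identification $\mqnbhd(\add) = \add_V \times \add_C$ from the hypergraph product structure, which is what makes Lemma~\ref{lem:reduced-nbhd} applicable here.
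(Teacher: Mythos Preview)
Your argument is correct and essentially mirrors the paper's proof: both telescope over the iterations, bound the ``new'' neighbors of each $\add^{(t)}$ by $|\mqnbhd(\add^{(t)})| + 2\epsilon\Delta\lVert\add^{(t)}\rVert$ plus a term landing in $\unsat\setminus\qnbhd(\env^{(t-1)})$, invoke Lemma~\ref{lem:reduced-nbhd} for the multi-neighborhood, and use disjointness of the $\unsat$-terms across iterations to sum to at most $|\unsat|$. If anything, you are more explicit than the paper about the disjointness of the $U_t$ and about the identification $|\mqnbhd(\add)| = |\add_V|\cdot|\add_C|$, both of which the paper leaves implicit.
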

\begin{proof}
Let $\env$ be an envelope, and suppose that in some iteration, the $\ssfind$ algorithm adds $\add$ to $\env$.

By definition, we only add sets $\add$ if $\score(\add) \leq 2\epsilon$.
Noting that $|\qnbhd(\add) \cap \sus| \geq |\uqnbhd(\add) \cap \sus|$ and recalling the definition of $\score$, we see that this is satisfied provided that 
    \begin{align}\label{eq:F-cond}
        \norm{\qnbhd(\add)\cap \sus}\geq |\uqnbhd(\add)| - 2\epsilon\Delta\lVert\add\rVert~,
    \end{align}
The set $\sus$ is the set of parity checks defined by $\sus=\qnbhd(\env)\cup \unsat$.
It will be convenient to write $\sus$ as a disjoint union,
\begin{align}
    \sus = \qnbhd(\env) \sqcup (\unsat \setminus \qnbhd(\env))~.
\end{align}
Using this partition of $\sus$ we can write
\begin{align}
\label{eq:partition-cap}
    |\qnbhd(\add) \cap \sus| = |\qnbhd(\add) \cap \qnbhd(\env)| + |\qnbhd(\add) \cap (\unsat \setminus \qnbhd(\env))|~.
\end{align}
We can use Eq.\ \eqref{eq:partition-cap} to bound the number of new neighbors $\add$ adds to $\env$,
\begin{align}
    \lvert \qnbhd(\add) \setminus \qnbhd(\env) \rvert
    &= \lvert \qnbhd(\add) \rvert - \lvert \qnbhd(\add) \cap \qnbhd(\env) \rvert\\
    &= \lvert \qnbhd(\add) \rvert - \lvert \qnbhd(\add) \cap \sus\rvert + |\qnbhd(\add) \cap (\unsat \setminus \qnbhd(\env))|
    \\&\leq \norm{\mqnbhd(\add)} + 2\epsilon\Delta\lVert \add \rVert + \norm{\qnbhd(\add) \cap (\unsat \setminus \qnbhd(\env))}~.
\end{align}
The last inequality uses (\ref{eq:F-cond}) to bound $ \lvert\qnbhd(\add) \rvert - \lvert \qnbhd(\add) \cap \sus\rvert$.
As Algorithm \ref{alg:ssfind} only adds reduced sets $\add_j$, i.e. sets $\add_j \subset \supp(\gen)$ for some generator $\gen$, such that $\norm{\add_i}\leq\frac{\Delta_C+\Delta_V}{2}$.
We can therefore use Lemma~\ref{lem:reduced-nbhd} to bound $|\mqnbhd(\add)|$:
\begin{align}
    |\qnbhd(\add) \setminus \qnbhd(\env)| \leq \left(\frac{1}{4}+2\epsilon\right)\lVert\add\rVert\Delta + |\qnbhd(\add) \cap (\unsat \setminus \qnbhd(\env))|~.
\end{align}

Suppose the algorithm works for $i$ rounds, and let $\env_i$ be the envelope in the $i$\textsuperscript{th} iteration.
Denote by $\add_i$ the set that we add in the $i$\textsuperscript{th} round, i.e.\ $\env_i = \env_{i-1} \union \add_i$, when $\add_i$ satisfies the condition in the claim. Then we have
\begin{align}
    \norm{\qnbhd(\env_i)} \leq& \sum_{j\leq i} \norm{\qnbhd(\add_j \setminus \env_{j-1})}\\
    \leq& \sum_{j \leq i} \left[ \left(\frac{1}{4}+2\epsilon\right)\lVert\add_j\rVert\Delta + \norm{\qnbhd(\add_j) \cap (\unsat \setminus \qnbhd(\env_{j-1}))}\right]\\
    \leq& \left(\frac{1}{4}+2\epsilon\right)\lVert\env_i\rVert\Delta + \norm{\unsat}~.
\end{align}
We use the fact that the sets $\add_j$ are disjoint, so $\lVert\env_i\rVert = \sum_{j\leq i}\lVert\add_i\rVert$.

This completes the proof.
\end{proof}

\subsubsection{Combining the bounds}
Using the upper bound and lower bound on $\norm{\qnbhd(\env)}$ from Lemma~\ref{lem:max-expanding-size} and Lemma~\ref{lem:upp-L-bound}, we arrive at a bound on $\lVert \env \rVert$.

\begin{lemma}
    \label{lem:combining-twonorm}
    Let $\env$ be the output of Algorithm~\ref{alg:ssfind} on a reduced error $\err$ such that $\Delta(\lVert \err \rVert+1)\frac{4}{1-10\epsilon} \leq \min(\alpha_V \Delta_V n,\alpha_C \Delta_C m)$.
    Then 
    \begin{align*}
        \lVert \env \rVert \leq \lVert \err\rVert \frac{4}{(1-10\epsilon)}~.
    \end{align*}
\end{lemma}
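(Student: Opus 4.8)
The plan is to combine the lower bound $\norm{\qnbhd(\env)}\geq\tfrac12(1-\epsilon)\Delta\lVert\env\rVert$ from Lemma~\ref{lem:max-expanding-size} with the upper bound $\norm{\qnbhd(\env)}\leq\norm{\unsat}+(\tfrac14+2\epsilon)\Delta\lVert\env\rVert$ from Lemma~\ref{lem:upp-L-bound}, and then to read off a bound on $\lVert\env\rVert$ after substituting in a bound on $\norm{\unsat}$ in terms of $\lVert\err\rVert$. The one genuine obstacle is that Lemma~\ref{lem:max-expanding-size} only applies to sets $\qset$ with $|\qset_V|\leq\alpha_V n$ and $|\qset_C|\leq\alpha_C m$, and a priori we do not know the envelope stays below this threshold throughout the run. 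I will get around this with a ``first bad iteration'' argument that exploits the fact that the envelope grows slowly; this is exactly why the hypothesis carries the extra ``$+1$''.

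First I would record two elementary facts. (i) Every unsatisfied parity check is adjacent to $\err$, so $\unsat\subseteq\qnbhd(\err)$; since each qubit in $\qubits_V$ has exactly $\Delta_V$ neighbours in $\pcs$ and each qubit in $\qubits_C$ has exactly $\Delta_C$, this gives $\norm{\unsat}\leq\Delta_V|\err_V|+\Delta_C|\err_C|=\Delta\lVert\err\rVert$. (ii) Every set $\add$ that $\ssfind$ adds satisfies $\add\subseteq\supp(\gen)$ for some generator $\gen$, so $|\add_V|\leq\Delta_C$ and $|\add_C|\leq\Delta_V$, whence $\lVert\add\rVert\leq 2$; thus $\lVert\env\rVert$ grows by at most $2$ per iteration. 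Also, directly from Definition~\ref{def:wtd_norm}, $|\env_V|\leq\Delta_C\lVert\env\rVert$ and $|\env_C|\leq\Delta_V\lVert\env\rVert$, so whenever $\lVert\env\rVert\leq\tfrac{4}{1-10\epsilon}(\lVert\err\rVert+1)$ the hypothesis $\Delta(\lVert\err\rVert+1)\tfrac{4}{1-10\epsilon}\leq\min(\alpha_V\Delta_V n,\alpha_C\Delta_C m)$ forces $|\env_V|\leq\alpha_V n$ and $|\env_C|\leq\alpha_C m$, making Lemma~\ref{lem:max-expanding-size} applicable.

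Now let $\emptyset=\env_0,\env_1,\dots,\env_T=\env$ be the successive envelopes, and suppose toward a contradiction that $\lVert\env_T\rVert>\tfrac{4}{1-10\epsilon}\lVert\err\rVert$. Since $\lVert\env_i\rVert$ jumps by at most $2\leq\tfrac{4}{1-10\epsilon}$ (using $0\le\epsilon<1/10$, so $\tfrac{4}{1-10\epsilon}\geq 4$), there is a first index $i^\star$ with $\lVert\env_{i^\star}\rVert>\tfrac{4}{1-10\epsilon}\lVert\err\rVert$, and by minimality $\lVert\env_{i^\star}\rVert\leq\tfrac{4}{1-10\epsilon}\lVert\err\rVert+2\leq\tfrac{4}{1-10\epsilon}(\lVert\err\rVert+1)$. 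By the previous paragraph Lemma~\ref{lem:max-expanding-size} applies to $\env_{i^\star}$, and combining it with Lemma~\ref{lem:upp-L-bound} and fact (i) yields
\[
\tfrac12(1-\epsilon)\Delta\lVert\env_{i^\star}\rVert\;\leq\;\norm{\qnbhd(\env_{i^\star})}\;\leq\;\Delta\lVert\err\rVert+\left(\tfrac14+2\epsilon\right)\Delta\lVert\env_{i^\star}\rVert.
\]
Dividing by $\Delta$ and using $\tfrac12(1-\epsilon)-\tfrac14-2\epsilon=\tfrac14-\tfrac{5\epsilon}{2}=\tfrac{1-10\epsilon}{4}>0$, this rearranges to $\lVert\env_{i^\star}\rVert\leq\tfrac{4}{1-10\epsilon}\lVert\err\rVert$, contradicting the choice of $i^\star$. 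Hence no bad iteration exists and $\lVert\env\rVert=\lVert\env_T\rVert\leq\tfrac{4}{1-10\epsilon}\lVert\err\rVert$.

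The only steps needing care are the slow-growth bookkeeping in (ii) and checking that the threshold hypothesis is precisely what is required to license the expansion lemma at the critical iteration; the final algebra is routine.
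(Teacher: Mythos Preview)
Your proof is correct and follows essentially the same strategy as the paper: combine Lemmas~\ref{lem:max-expanding-size} and~\ref{lem:upp-L-bound}, use $\norm{\unsat}\leq\Delta\lVert\err\rVert$, and run a first-bad-iteration/slow-growth argument to ensure the expansion lemma applies at the critical step. The only differences are cosmetic---you pick the first iteration where the target bound $\tfrac{4}{1-10\epsilon}\lVert\err\rVert$ is exceeded, whereas the paper picks the last iteration where $\Delta\lVert\env_i\rVert\leq\min(\alpha_V\Delta_V n,\alpha_C\Delta_C m)$---and your per-step growth bound $\lVert\add\rVert\leq 2$ (from $|\add_V|\leq\Delta_C$, $|\add_C|\leq\Delta_V$) is if anything more carefully justified than the paper's claimed $\lVert\add\rVert\leq 1$.
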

\begin{proof} 
    Assume towards a contradiction that the $\ssfind$ algorithm outputs an envelope $\env$ that is too large, i.e.\
    \begin{align}
        \lVert \env\rVert > \lVert \err \rVert \frac{4}{(1-10\epsilon)}~.
    \end{align}
    Let $\env_j$ be the envelope after the $j$\textsuperscript{th} iteration of the algorithm.
    Let $\env_i$, the envelope in the $i$\textsuperscript{th} iteration, be the largest envelope such that it still obeys $\Delta \lVert \env_i\rVert\leq \min(\alpha_V \Delta_V n,\alpha_C \Delta_C m)$.

    This implies that
    \begin{align}
        |\env_i \cap \qubits_V| \leq \alpha_V n~, |\env_i \cap \qubits_C| \leq \alpha_C m~.
    \end{align}
    From Lemma~\ref{lem:max-expanding-size}, this set expands:
    \begin{align}
        \norm{\qnbhd(\env_i)}\geq \frac{1}{2}(1-\epsilon)\Delta \lVert \env_i \rVert~.
    \end{align}
    From Lemma~\ref{lem:upp-L-bound}, its size is bounded from above:
    \begin{align}
         \norm{\qnbhd(\env_i)}\leq \norm{\unsat} + \left(\frac{1}{4} + 2\epsilon \right)\Delta \lVert \env_i \rVert~.      
    \end{align}
    Together this implies that
    \begin{align}
    \label{eq:twonorm-bound}
        \Delta\lVert \env_i \rVert \leq \norm{\unsat}\frac{4}{(1-10\epsilon)}~.
    \end{align}
    Each unsatisfied parity check has to be adjacent to some error, $\norm{\unsat}\leq \Delta \lVert \err \rVert$.
    The above bound in Eq.\ \eqref{eq:twonorm-bound} translates to 
    \begin{align}\label{eq:l-bound}
        \lVert \env_i \rVert \leq \lVert \err \rVert\frac{4}{(1-10\epsilon)}~.
    \end{align}

    The set $\env_i$ is chosen to be the maximal set such that $\Delta \lVert \env_i\rVert\leq \min(\alpha_V \Delta_Vn,\alpha_C \Delta_C m )$.
    Therefore, $\Delta\lVert \env_{i+1}\rVert > \min(\alpha_V \Delta_V n,\alpha_C \Delta_C m)$.
    The set $\env_{i+1}$ is created by adding to $\env_i$ a reduced set $\add_{i+1}$, which means that $ \Delta\lVert \env_{i+1} \rVert \leq \Delta(\lVert \env_{i} \rVert + 1)$.
    
    Using (\ref{eq:l-bound}) and the bound on $\lVert \err\rVert$ from the lemma statement we get
    \begin{align}
        \Delta\lVert \env_{i+1} \rVert &\leq \Delta(\lVert \env_{i} \rVert + 1)\\
        &\leq  \Delta(\lVert  \err \rVert+1)\frac{4}{(1-10\epsilon)}\\
        &\leq \min(\alpha_V \Delta_V n,\alpha_C \Delta_C m)~,
    \end{align}
    which is a contradiction. Therefore the set $\env$ must satisfy $\lVert\env\rVert\leq 4\lVert \err\rVert/(1-10\epsilon)$.
\end{proof}

\begin{corollary}
\label{cor:combining-onenorm}
    Let $\env$ be the output of Algorithm~\ref{alg:ssfind} on a reduced error $\err$ such that
    \begin{align}
        |\err| \leq \frac{(1-10\epsilon)}{4} \cdot \frac{\Delta_V}{\Delta_C} \cdot \min(\alpha_V n, \alpha_C  m) - \Delta_V~.
    \end{align}
    Then the size of the envelope is bounded:
    \begin{align*}
        \lvert \env \rvert \leq \lvert \err\rvert \cdot \frac{4}{(1-10\epsilon)} \frac{\Delta_C}{\Delta_V}~.
    \end{align*}
\end{corollary}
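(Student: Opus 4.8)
The plan is to deduce this corollary from Lemma~\ref{lem:combining-twonorm} by passing between the weighted norm $\lVert\cdot\rVert$ and ordinary cardinality. The key observation is that, since the underlying graph $G$ is unbalanced with $\Delta_C>\Delta_V$, for any $\qset\subseteq\qubits$ we have
\[ \frac{|\qset|}{\Delta_C} \;\leq\; \lVert \qset\rVert \;=\; \frac{|\qset_V|}{\Delta_C}+\frac{|\qset_C|}{\Delta_V} \;\leq\; \frac{|\qset|}{\Delta_V}, \]
both inequalities coming from replacing each denominator by the larger (resp.\ smaller) of $\Delta_C$ and $\Delta_V$ together with $|\qset|=|\qset_V|+|\qset_C|$.

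First I would check that the hypothesis on $|\err|$ stated in the corollary implies the hypothesis $\Delta(\lVert\err\rVert+1)\frac{4}{1-10\epsilon}\leq\min(\alpha_V\Delta_V n,\alpha_C\Delta_C m)$ needed to invoke Lemma~\ref{lem:combining-twonorm}. Using the right-hand bound above and $\Delta=\Delta_C\Delta_V$, we get $\Delta\lVert\err\rVert\leq\Delta_C|\err|$ and hence $\Delta(\lVert\err\rVert+1)\leq\Delta_C(|\err|+\Delta_V)$. The assumed bound on $|\err|$ rearranges to $|\err|+\Delta_V\leq\frac{1-10\epsilon}{4}\cdot\frac{\Delta_V}{\Delta_C}\cdot\min(\alpha_V n,\alpha_C m)$, so multiplying through by $\Delta_C\cdot\frac{4}{1-10\epsilon}$ gives $\Delta(\lVert\err\rVert+1)\frac{4}{1-10\epsilon}\leq\Delta_V\min(\alpha_V n,\alpha_C m)=\min(\alpha_V\Delta_V n,\alpha_C\Delta_V m)$; since $\Delta_V<\Delta_C$ this is at most $\min(\alpha_V\Delta_V n,\alpha_C\Delta_C m)$, as required.

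Then I would apply Lemma~\ref{lem:combining-twonorm} to obtain $\lVert\env\rVert\leq\frac{4}{1-10\epsilon}\lVert\err\rVert$, and finally translate back using the left-hand inequality for $\env$ and the right-hand one for $\err$:
\[ |\env| \;\leq\; \Delta_C\lVert\env\rVert \;\leq\; \Delta_C\cdot\frac{4}{1-10\epsilon}\lVert\err\rVert \;\leq\; \frac{4}{1-10\epsilon}\cdot\frac{\Delta_C}{\Delta_V}\,|\err|, \]
which is exactly the claimed bound. There is no serious obstacle; the only thing to be careful about is keeping straight which direction each inequality between $\lVert\cdot\rVert$ and $|\cdot|$ goes, and remembering to use $\Delta_C>\Delta_V$ when comparing the two forms of $\min(\cdot,\cdot)$ in the hypothesis check.
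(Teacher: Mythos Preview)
Your proposal is correct and follows essentially the same approach as the paper's proof: verify the hypothesis of Lemma~\ref{lem:combining-twonorm} by bounding $\Delta\lVert\err\rVert\leq\Delta_C|\err|$, apply the lemma, and convert back via $|\env|\leq\Delta_C\lVert\env\rVert$ and $\lVert\err\rVert\leq|\err|/\Delta_V$. Your treatment is in fact slightly more careful than the paper's, which writes $\min(\alpha_V\Delta_V n,\alpha_C\Delta_C m)$ directly without explicitly noting the $\Delta_V<\Delta_C$ step you spell out.
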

\begin{proof}
    Consider an error $\err \subseteq \qubits$ such that
    \begin{align}
        |\err| \leq \frac{(1-10\epsilon)}{4} \cdot \frac{\Delta_V}{\Delta_C} \min(\alpha_V n, \alpha_C  m) - \Delta_V~.
    \end{align}
    This implies that
    \begin{align}
        \Delta \lVert \err \rVert &= |\err_V|\Delta_V + |\err_C|\Delta_C\\
        &\leq |\err|\Delta_C\\
        &\leq \frac{(1-10\epsilon)}{4}\min\left(\alpha_V \Delta_V n, \alpha_C \Delta_C m\right)-\Delta~.
    \end{align}
    Equivalently, $\Delta(\lVert \err \rVert +1)(4/(1-10\epsilon)) \leq \min(\alpha_V \Delta_V n, \alpha_C \Delta_C m)$.
    Lemma~\ref{lem:combining-twonorm} guarantees that
    \begin{align}
        \lVert \env \rVert \leq \lVert \err\rVert \frac{4}{(1-10\epsilon)}~.
    \end{align}
    We convert the weighted norm to a standard one using the fact that $\Delta_C \geq \Delta_V$.
    As $\lVert \env \rVert = \frac{\norm{\env_V}}{\Delta_C} + \frac{\norm{\env_C}}{\Delta_V}$, it implies $|\env|\leq \Delta_C \lVert\env \rVert$. 
    Similarly, $\lVert \err \rVert = \frac{\norm{\err_V}}{\Delta_C} + \frac{\norm{\err_C}}{\Delta_V}$, which implies $\Delta_V\lVert \err \rVert \geq \norm{\err}$.
    
    Together, we get
    \[
        |\env|\leq \Delta_C \lVert \env \rVert \leq \frac{\Delta_C}{\Delta_V}|\err|\frac{4}{(1-10\epsilon)}~.
    \]
    
\end{proof}

\subsection{Proof of Theorem~\ref{thm:main}}
The main result, Theorem~\ref{thm:main}, makes two claims.
Informally, these can be stated as follows:
\begin{enumerate}
    \item \textbf{Correctness:} For sufficiently small error $\err$, $\ssfind$ produces an envelope $\env$ that contains the error $\err$ and simultaneously, $\env$ is not much larger than $\err$.
    \item \textbf{Time Complexity:} $\ssfind$ terminates in time $O(|\err|)$.
\end{enumerate}
We formally establish each of these claims below.

\subsubsection[Bound on E]{Correctness}

For the error $\err$ to be entirely covered by the envelope $\env$, Lemma~\ref{lem:coverage} requires
\begin{align*}
    |\err_V| \leq \alpha_V n~, \quad |\err_C| \leq \alpha_C m~.
\end{align*}
On the other hand, to guarantee that $\env$ is bounded, Corollary~\ref{cor:combining-onenorm} requires
\begin{align*}
    |\err| \leq \frac{(1-10\epsilon)}{4} \cdot \frac{\Delta_V}{\Delta_C} \cdot \min(\alpha_V n, \alpha_C  m) - \Delta_V~.
\end{align*}
The latter is the stronger of the two.

\subsubsection{Time complexity}\label{sec:time}
The algorithm is divided into two phases.
In the $\mathtt{Setup}$ $\mathtt{Phase}$, we compute the score of each reduced set $\add \in \minsets(\gens)$.
This involves computing the score of each generator $\gen \in \gens$.
The output is stored in a data structure lookup table $\mathtt{lookup}\_\mathtt{table}$, a dictionary with two keys `$\leq 2\epsilon$' and `$> 2\epsilon$'.
Corresponding to each key, $\mathtt{lookup}\_\mathtt{table[key]}$ is an unsorted array.
The $\mathtt{Setup}$ $\mathtt{Phase}$ therefore takes time proportional to the number of generators which is $\Theta(N)$.
For each generator, we compute the score for every reduced set $\add \in \supp(\gen)$ and there are at most $2^{\Delta_V+\Delta_C}$ such sets.
Therefore, the time complexity scales exponentially in the degree $\Delta_V$ and $\Delta_C$ of the input graph $G$.
However, these are independent of $N$ for the family of LDPC codes.
In fact, this can be made considerably better.
We only need to compute the score of generators that are adjacent to $\unsat$.
The total number of generators that are adjacent to $\unsat$ is at most $O(|\unsat|) = O(\sqrt{N})$.

In the $\mathtt{Main}$ $\mathtt{Phase}$, we add the set $\add$ to $\env$ if $\score(\add) \leq 2\epsilon$.
This takes constant time as $\add$ is itself a constant-sized set.
After adding $\add$ to $\env$ we only need to update the score of reduced sets in generators that are adjacent to $\add$.
There are only a constant number of such generators.
To be precise, there are at most $\Delta\lVert\add\rVert$ generators that are incident to $\add$.
We then need to update whether these generators belong to $\mathtt{lookup}\_\mathtt{table[\leq 2\epsilon]}$ or $\mathtt{lookup}\_\mathtt{table[> 2\epsilon]}$.
As these are unsorted arrays, insertion takes constant time.
Hence the time complexity of each iteration is constant.

In each iteration, the size of the envelope increases by at least one.
Therefore, the total number of iterations is at most $|\env|$.
From Corollary~\ref{cor:combining-onenorm}, this is a function of the size of the error which in turn obeys $|\err| = O(\sqrt{N})$.
The time required to complete the $\mathtt{Main}$ $\mathtt{Phase}$ is the product of the time required for each iteration and the total number of iterations.
It is thus $O(|\err|) = O(\sqrt{N})$.

This completes the proof of Theorem~\ref{thm:main} (and our paper).

\bibliographystyle{alpha}
\bibliography{references}

\appendix
\section{Viderman's algorithm}\label{sec:classic-code}

In this section, we present a proof of Viderman's algorithm for decoding errors on (classical) expander codes~\cite{viderman2013linear}. This is a simpler proof that in the original paper, that achieves slightly worse parameters.

\begin{claim}
    The $\find$ algorithm, when receiving a set $\unsat$ which are the unsatisfied parity checks of a word $\ccorruptcw\in\set{0,1}^n$, such that $\ccorruptcw=\ccodeword+\berr$ for $\ccodeword\in \ccode$ and an error $\berr$ such that $|\set{i|\berr_i=1}|\leq (1-3\epsilon_V)(\alpha_V n-1)$, outputs an envelope $L$ such that $\set{i|e_i=1}\subset L$.
\end{claim}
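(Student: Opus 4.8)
The plan is to establish the \emph{coverage} property by contradiction, following the sketch in Section~\ref{sec:techoverview}. Write $\err := \{i : \berr_i = 1\}$ for the error set; since $\ccodeword \in \ccode$ satisfies every parity check, a check $c$ lies in $\unsat$ exactly when $|\nbhd(c) \cap \err|$ is odd. Assume $\find$ returns $L$ with $\err \not\subseteq L$, and set $B := \err \setminus L \neq \emptyset$. The first thing I would record are two monotone invariants that hold throughout the run, and hence at termination: $\unsat \subseteq R$ (as $R$ starts at $\unsat$ and only grows), and $\nbhd(L) \subseteq R$ (as adding $v$ to $L$ also adds all of $\nbhd(v)$ to $R$).

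The crux of the argument is the inclusion $\unbhd(B) \subseteq R$. To see it, fix $c \in \unbhd(B)$, so $c$ has exactly one neighbour in $B$, and use the disjoint decomposition $\err = B \sqcup (\err \cap L)$ to split into two cases. If $c$ has a neighbour in $\err \cap L$, then $c \in \nbhd(L) \subseteq R$. Otherwise $c$'s single neighbour in $B$ is its only neighbour in all of $\err$, so $|\nbhd(c) \cap \err| = 1$ is odd, whence $c \in \unsat \subseteq R$. Either way $c \in R$.

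Next I would apply expansion together with an averaging step. Since $B \subseteq \err$ and $|\err| \leq (1-3\epsilon_V)(\alpha_V n - 1) < \alpha_V n$, we have $|B| \leq \alpha_V n$, so Definition~\ref{def:exp} and the averaging remark immediately following it give $|\unbhd(B)| \geq (1-2\epsilon_V)\Delta_V |B|$. Each vertex of $\unbhd(B)$ sends exactly one edge into $B$, so counting these edges across the $|B|$ left vertices yields some $v \in B$ with $|\nbhd(v) \cap \unbhd(B)| \geq (1-2\epsilon_V)\Delta_V = h$. Combined with $\unbhd(B) \subseteq R$, this gives $|\nbhd(v) \cap R| \geq h$ for a vertex $v \notin L$, contradicting the fact that the while-loop had already terminated. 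Hence $B = \emptyset$, i.e.\ $\err \subseteq L$.

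I expect the only delicate point to be the case analysis for $\unbhd(B) \subseteq R$ — in particular the parity observation in the second case, which is exactly where "unsatisfied" is used — together with checking that the hypothesis forces $|B| \le \alpha_V n$ so that expansion legitimately applies; everything else is bookkeeping. I would also note that the quantitative bound $(1-3\epsilon_V)(\alpha_V n - 1)$ is much stronger than coverage alone requires (coverage only uses $|\err| < \alpha_V n$): it is the binding constraint in the separate argument bounding $|L|$, and termination of $\find$ is immediate since each iteration enlarges $L \subseteq V$ by at least one new vertex.
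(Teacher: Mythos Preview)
Your proposal is correct and follows essentially the same route as the paper's proof: define $B=\err\setminus L$, argue that $\unbhd(B)\subseteq R$ via the two-case analysis (neighbor in $L$ versus unique neighbor in $\err$), and then use expansion plus averaging to find a vertex $v\in B$ with $|\nbhd(v)\cap R|\geq h$, contradicting termination. Your write-up is slightly tidier in isolating the invariants $\unsat\subseteq R$ and $\nbhd(L)\subseteq R$ up front and in proving $\unbhd(B)\subseteq R$ globally before specializing to $v$, and your closing remark that coverage alone needs only $|\err|<\alpha_V n$ (with the tighter hypothesis being used for the separate size bound on $L$) matches the paper's structure exactly.
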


We begin with an intuitive description of the decoding algorithm. Denote by $\cerr=\set{i|\berr_i=1}$ the set of vertices that has an error.
We refer to $L$ as the envelope and $R = \nbhd(L) \union \unsat$ as the set of untrustworthy parity checks.
In each iteration, $\find$ searches for bits that are adjacent to too many untrustworthy parity checks.
If it finds such a bit $v \in V$, it marks it as suspicious and marks all its neighbors $\nbhd(v)$ as untrustworthy.
When the algorithm terminates, it returns an envelope $L$ which is guaranteed to contain the error $\cerr$.
By erasing all bits in the support of $L$ and running an appropriate erasure decoding algorithm, we can recover the transmitted message.

The proof that $\find$ works correctly proceeds in two phases.
First, they show that $\cerr \subseteq L$, i.e.\ that the envelope $L$ covers the error $\cerr$.
Next, they show that the algorithm will terminate such that $L$ does not encompass the entire set of bits.

\begin{claim}
    Assuming $|\cerr|\leq \alpha_V n$, the $\find$ algorithm returns a set $L$ such that $\cerr\subset L$.
\end{claim}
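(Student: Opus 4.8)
The plan is to argue by contradiction. Suppose that when $\find$ terminates the returned envelope $L$ fails to contain all of $\cerr$, and set $B = \cerr \setminus L$, which is then nonempty. The whole argument rests on a single tension: $B$ is small, so by expansion it has many \emph{unique} neighbors; but every one of those unique neighbors will turn out to already lie in the set $R$ of untrustworthy checks; hence some vertex of $B$ has at least $h = (1-2\epsilon_V)\Delta_V$ neighbors in $R$, contradicting the fact that the while loop has exited.

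First I would extract the expansion consequences. Since $B \subseteq \cerr$ and $|\cerr| \le \alpha_V n$, Definition~\ref{def:exp} gives $|\nbhd(B)| \ge (1-\epsilon_V)\Delta_V|B|$, and the averaging remark following that definition upgrades this to $|\unbhd(B)| \ge (1-2\epsilon_V)\Delta_V|B|$. A second averaging step---over the $|B|$ vertices of $B$, each of degree $\Delta_V$, noting that each element of $\unbhd(B)$ absorbs exactly one edge emanating from $B$---then yields a vertex $v \in B$ with
\[
  |\nbhd(v) \cap \unbhd(B)| \;\ge\; (1-2\epsilon_V)\Delta_V \;=\; h .
\]

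Second, the key claim is that $\unbhd(B) \subseteq R$ at the moment the algorithm stops. To see this, fix $c \in \unbhd(B)$, so $c$ has exactly one neighbor in $B$. Using the partition $\cerr = (\cerr \cap L) \sqcup B$, there are two cases. If $c$ has a neighbor in $\cerr \cap L$, then $c \in \nbhd(L) \subseteq R$, because $\find$ throws $\nbhd(v')$ into $R$ every time it adds a vertex $v'$ to $L$. Otherwise the unique neighbor of $c$ in $B$ is also its unique neighbor in all of $\cerr$, so $c \in \unbhd(\cerr)$; but a parity check incident to exactly one corrupted bit is unsatisfied, hence $c \in \unsat \subseteq R$, the last inclusion because $R$ is initialized to $\unsat$ and only grows. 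This proves the claim.

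Combining the two points, at termination $v$ satisfies $|\nbhd(v) \cap R| \ge |\nbhd(v) \cap \unbhd(B)| \ge h$, so the while loop could not have exited---a contradiction. Hence $B = \emptyset$, i.e.\ $\cerr \subseteq L$. I expect no serious obstacle here: the only place that needs any care is the case analysis for $\unbhd(B) \subseteq R$, and within it the observation $\unbhd(\cerr) \subseteq \unsat$, which is exactly what makes the initialization $R \gets \unsat$ pull its weight; everything else is expansion-plus-averaging bookkeeping. I would stress separately that this lemma gives only \emph{coverage}---it says nothing about $|L|$, which is the content of the next claim and is established by a different argument that two-sidedly estimates $|\nbhd(L)|$.
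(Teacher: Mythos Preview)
Your proposal is correct and follows essentially the same approach as the paper's proof: assume $B = \cerr \setminus L$ is nonempty, use expansion to find a vertex $v \in B$ with at least $(1-2\epsilon_V)\Delta_V$ neighbors in $\unbhd(B)$, then argue via a two-case analysis that $\unbhd(B) \subseteq R$, contradicting termination. The only cosmetic difference is that the paper splits cases according to whether $c$ has a neighbor in $L$ (rather than in $\cerr \cap L$), but both splits yield the same conclusion.
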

\begin{proof}
    Let $L$ be the set after $\find$ algorithm finished running, and assume towards a contradiction that $\cerr\not\subseteq L$. Let $B = \cerr\setminus L$, the parts of the error that are not covered by $L$.

    From our assumption $|B|\leq|\cerr|\leq \alpha_V n$, therefore we have that 
    \begin{align}
        |\nbhd(B)|\geq |B|(1-\epsilon_V)\Delta_V.
    \end{align}
    This implies a lower bound on the unique expansion of $B$:
     \begin{align}
        |\unbhd(B)|\geq |B|(1-2\epsilon_V)\Delta_V.
    \end{align}
    Let $v\in B$ be a vertex with $|\unbhd(B)\cap\nbhd(v)|\geq (1-2\epsilon_V)\Delta_V$. Then  every parity check $c\in \unbhd(B)\cap\nbhd(v)$ is adjacent to $v$ that is an error, and not to any other vertices in $B$. If it is adjacent to some $v'\in L$ then by definition $c\in R$. If it is not adjacent to $L$, then it is adjacent to exactly one error, and therefore in $c\in\unsat\subset R$.

    Therefore, $|\nbhd(v)\cap R|\geq (1-2\epsilon)\Delta_V$ and $v$ should have been added to $L$.
\end{proof}

We now give a simple proof that the above algorithm stops. In Viderman's paper \cite{viderman2013linear} there is also a more complicated proof that achieves better parameters. We present the simpler proof to demonstrate the main idea of the stopping condition.
\begin{claim}
    \label{claim:classical-stopping}
    Assume that $|\cerr|\leq (1-3\epsilon_V)(\alpha_V n-1)$.  Then the $\find$ algorithm (Algorithm~\ref{alg:find-gentle}) returns a set $L$ such that $|L|\leq \frac{|\cerr|}{(1-3\epsilon_V)}$.
\end{claim}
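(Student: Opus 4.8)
The plan is to follow the ``bounded growth'' strategy outlined in Section~\ref{sec:techoverview}: bound $|\nbhd(L)|$ both from below (via expansion) and from above (via the loop condition), and derive a contradiction whenever $L$ grows past $|\cerr|/(1-3\epsilon_V)$.

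First I would set up bookkeeping for the run of $\find$ (Algorithm~\ref{alg:find-gentle}). Let $v_1, v_2, \dots$ be the vertices added to $L$ in the order they are added, and for $i \geq 0$ let $L_i = \{v_1,\dots,v_i\}$ and $R_i = \nbhd(L_i) \union \unsat$ be the values of $L$ and $R$ after $i$ iterations (an easy induction shows $R_i$ has exactly this form). By the loop guard, $v_i$ is added only because $|\nbhd(v_i) \cap R_{i-1}| \geq h$, where $h = (1-2\epsilon_V)\Delta_V$. For the \emph{upper bound}, write $\nbhd(L_i)$ as the disjoint union $\bigcup_{j \le i}\bigl(\nbhd(v_j)\setminus\nbhd(L_{j-1})\bigr)$ and bound each term. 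Since $R_{j-1}$ is the disjoint union $\nbhd(L_{j-1}) \sqcup (\unsat\setminus\nbhd(L_{j-1}))$, at least $h - |\nbhd(v_j)\cap(\unsat\setminus\nbhd(L_{j-1}))|$ of the $\Delta_V$ neighbors of $v_j$ already lie in $\nbhd(L_{j-1})$, so $|\nbhd(v_j)\setminus\nbhd(L_{j-1})| \leq 2\epsilon_V\Delta_V + |\nbhd(v_j)\cap(\unsat\setminus\nbhd(L_{j-1}))|$. The key point is that each check $c \in \unsat$ contributes to the last term for at most one index $j$: once $c \in \nbhd(L_j)$, it never again belongs to $\unsat\setminus\nbhd(L_{j'})$ for $j' \geq j$. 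Summing over $j \leq i$ thus gives $|\nbhd(L_i)| \leq 2\epsilon_V\Delta_V|L_i| + |\unsat|$, and since every unsatisfied check is adjacent to an error, $|\unsat| \leq |\nbhd(\cerr)| \leq \Delta_V|\cerr|$.

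Next, the \emph{lower bound}: as long as $|L_i| \leq \alpha_V n$, the left vertex-expansion of $G$ (Definition~\ref{def:exp}) gives $|\nbhd(L_i)| \geq (1-\epsilon_V)\Delta_V|L_i|$. Combining this with the upper bound yields
\[
(1-\epsilon_V)\Delta_V|L_i| \;\leq\; 2\epsilon_V\Delta_V|L_i| + \Delta_V|\cerr|,
\qquad\text{i.e.}\qquad
(1-3\epsilon_V)|L_i| \leq |\cerr|,
\]
valid for every prefix with $|L_i| \leq \alpha_V n$. Finally I would run the contradiction argument: suppose the output $L = L_\ell$ had $|L| > |\cerr|/(1-3\epsilon_V)$. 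By the hypothesis $|\cerr| \leq (1-3\epsilon_V)(\alpha_V n - 1)$ we have $|\cerr|/(1-3\epsilon_V) \leq \alpha_V n - 1$, so there is a first index $i$ with $|L_i| > |\cerr|/(1-3\epsilon_V)$; since each iteration adds exactly one vertex, $|L_i| = |L_{i-1}| + 1 \leq |\cerr|/(1-3\epsilon_V) + 1 \leq \alpha_V n$, so the combined inequality applies to $L_i$ and gives $(1-3\epsilon_V)|L_i| \leq |\cerr|$, contradicting the choice of $i$. Hence $|L| \leq |\cerr|/(1-3\epsilon_V)$.

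I expect the main obstacle to be the bookkeeping in the upper bound step — specifically, arguing cleanly that the unsatisfied checks are ``absorbed'' into $\nbhd(L)$ at most once so their contributions sum to at most $|\unsat|$ rather than accumulating — together with the care needed to restrict attention to prefixes $L_i$ small enough for expansion to apply; the ``$-1$'' in the hypothesis $|\cerr| \leq (1-3\epsilon_V)(\alpha_V n - 1)$ is exactly what makes the ``first overflow'' prefix $L_i$ still satisfy $|L_i| \leq \alpha_V n$.
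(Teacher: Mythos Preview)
Your proposal is correct and follows essentially the same approach as the paper's proof: both derive the upper bound $|\nbhd(L_i)| \leq 2\epsilon_V\Delta_V|L_i| + |\unsat|$ from the loop condition (with your explicit ``each $c\in\unsat$ is absorbed at most once'' being the justification the paper leaves implicit), combine it with the expansion lower bound, and run a contradiction on the first/last prefix near the $\alpha_V n$ threshold. The only cosmetic difference is that the paper picks the latest $i$ with $|L_i|\le \alpha_V n$ while you pick the first $i$ with $|L_i|>|\cerr|/(1-3\epsilon_V)$; these are equivalent framings of the same overflow argument.
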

\begin{proof}
Let $L_0=\emptyset,L_1,\ldots,L_t$ be the envelopes during the run of the algorithm, and let $v_1,v_2\ldots,v_t$ be the vertices added, i.e. $L_i = \set{v_i}\cup L_{i-1}$. Assume towards a contradiction that the claim does not hold, i.e. that $|L_t|>\frac{|\cerr|}{(1-3\epsilon_V)}$, and let $i$ be the latest iteration such that $|L_i|\leq \alpha_V n$.

The proof uses a lower bound and an upper bound on $\norm{\nbhd(L_i)}$. 
The lower bound is given from the graph expansion. Since $G$ is an $(\alpha_V,\epsilon_V)$ left vertex-expander,
\begin{align}
    \norm{\nbhd(L_i)}\geq (1-\epsilon_V)\Delta_V\norm{L_i}~.
\end{align}

We now show an upper bound on $\norm{\nbhd(L_i)}$. From the algorithm, a vertex $v_j$ is added to $L_{j-1}$ only if $|\nbhd(v_j)\cap R|\geq (1-2\epsilon_V)\Delta_V$. Observe that $R=\nbhd(L)\sqcup(\unsat\setminus\nbhd(L))$, so we can write for every $j\in[t]$
\begin{align}
    |\nbhd(v_j)\cap \nbhd(L_{j-1})|=&|\nbhd(v_j)\cap R| - |\nbhd(v_j)\cap(\unsat\setminus\nbhd(L_{j-1}))|\\\geq& (1-2\epsilon_V)\Delta_V-|\nbhd(v_j)\cap(\unsat\setminus\nbhd(L_{j-1}))|~.
\end{align}
Using this inequality we can bound the expansion of $L_i$,
\begin{align}
    |\nbhd(L_{i})| =& \sum_{j=1}^i |\nbhd(L_{j})\setminus \nbhd(L_{j-1})|\\
    =&\sum_{j=1}^i \left(|\nbhd(v_{j})| - |\nbhd(v_{j})\cap \nbhd(L_{j-1})|\right)\\
    \leq& \sum_{j=1}^i \left(\Delta_V - (1-2\epsilon_V)\Delta_V +  |\nbhd(v_j)\cap(\unsat\setminus\nbhd(L_{j-1}))| \right)\\
    \leq & 2\epsilon_V\Delta_V i + |\unsat|~.
\end{align}

We use the two bounds on $|\nbhd(L_i)|$ to get a bound on $|L_i|$. We note that $|L_i|=i$, as we add a single vertex on each round.
We get
\begin{align}
    (1-\epsilon_V)\Delta_V\norm{L_i}\leq2\epsilon_V\Delta_V |L_i| + |\unsat|~.
\end{align}
This implies a bound on $L_t$,
\begin{align}
    |L_i|\leq \frac{|\unsat|}{(1-3\epsilon_V)\Delta_V} \leq \frac{|\cerr|}{1-3\epsilon_V}~,
\end{align}
using the fact that $|\unsat|\leq\Delta_V|\cerr|$. According to our assumption, $|L_{i+1}| = |L_i| +1 > \alpha_V n$, i.e. $|L_i| >  \alpha_V n-1$, meaning that $|\cerr| > (1-3\epsilon_V)(\alpha_V n-1)$, which is a contradiction.
\end{proof}

As was already highlighted in Viderman's original paper \cite{viderman2013linear}, this algorithm asks less of $G$ when compared to $\flip$---it is sufficient that $\epsilon_V < 1/3$ rather than $\epsilon_V < 1/4$ \cite{sipser1996expander}.

\end{document}